\newtheorem{proposition}{Proposition}
\newtheorem{lemma}{Lemma}
\newtheorem{corollary}{Corollary}
\theoremstyle{definition}
\newtheorem{example}{Example}
\newcommand{\ave}[1]{\langle #1 \rangle}
\newcommand{\bra}[1]{\langle #1|}
\newcommand{\ket}[1]{| #1 \rangle }
\newcommand{\ip}[2]{{\langle #1|}{ #2 \rangle }}
\newcommand{\tr}[1]{{\rm tr}[#1]}
\newcommand{\be}{\begin{eqnarray}}
\newcommand{\ee}{\end{eqnarray}}
\def\Tr{{\rm tr}}
\newcommand{\cE}{{\cal E}}
\newcommand{\cG}{{\cal G}}
\newcommand{\cI}{{\cal I}}
\newcommand{\cF}{{\cal F}}
\newcommand{\cA}{{\cal A}}
\begin{document}

\title{Local two-qubit entanglement-annihilating channels}

\author{Sergey N. Filippov$^1$, Tom\'{a}\v{s} Ryb\'{a}r$^2$, and M\'{a}rio Ziman$^{2,3}$}
\affiliation{%
$^1$Moscow Institute of Physics and Technology, Moscow Region, Russia \\
$^2$Institute of Physics, Slovak Academy of
Sciences, Bratislava, Slovakia \\
$^3$Institute for Theoretical Physics, ETH Zurich, 8093 Zurich, Switzerland}

\date{\today}

\begin{abstract}
We address the problem of the robustness of entanglement of
bipartite systems (qubits) interacting with dynamically
independent environments. In particular, we focus
on characterization of
so-called local entanglement-annihilating two-qubit channels,
which set the maximum permissible noise level allowing to perform
entanglement-enabled experiments. The differences, but also subtle
relations between entanglement-breaking and local
entanglement-annihilating channels are emphasized. A detailed
characterization of latter ones is provided for a variety of
channels including depolarizing, unital, (generalized)
amplitude-damping, and extremal channels. We consider also the
convexity structure of local entanglement-annihilating qubit
channels and introduce a concept of entanglement-annihilation
duality.
\end{abstract}

\pacs{03.65.Ud, 03.67.Mn, 03.67.-a}

\maketitle

\section{\label{section:introduction} Introduction}

Flourishing field of quantum information theory is obliged to the
phenomenon of quantum entanglement~\cite{schrodinger} exhibited by
multipartite quantum systems. In the last decades many
entanglement-enabled applications of quantum states have been
developed and experimentally realized such as quantum key
distribution, dense coding, quantum teleportation, etc. (see the
detailed review~\cite{horodecki-review}). These quantum
information protocols operate efficiently providing that the
entanglement between the involved parties (Alice and Bob) is
preserved. However, during the protocols the involved systems
interact with an environment, which introduces (practically)
unavoidable noise. As a result of these influences, Alice and Bob
manipulate modified states whose entanglement can substantially
differ from the original one. It may even happen that the systems
become disentangled whatever state they start with. Under such
circumstances no entanglement-enabled application is
implementable.

For purposes of quantum communication protocols it is reasonable
to assume that the influence of environments of Alice and Bob are
independent. That is, the joint noise applied on the shared state
$\rho_{\rm in}$ is of the form $\cE_1\otimes\cE_2$, where
$\cE_1,\cE_2$ are local channels describing the interaction of
Alice's and Bob's subsystem, respectively, with their
environments. The question of our interest is the robustness of
the initial entanglement with respect to local noises, i.e. to
characterize the entanglement properties of states $\rho_{\rm
out}=(\cE_1 \otimes\cE_2) [\rho_{\rm in}]$. Different variations
of this problem of so-called entanglement dynamics were addressed
in a number of papers (see,
e.g.,~\cite{zyczkowski,sinayskiy,shan,scala,cui,altintas,ferraro,li2010,li2011,merkli}),
where the time evolution of two-qubit entanglement was studied for
different physical systems (initial state, types of interqubit
interactions and environments). Many researches have also paid
their attention to the phenomena known in the literature as sudden
death and sudden birth of the entanglement (see,
e.g.,~\cite{almeida,yu} and references therein). In contrast to
the studies, where the time evolution of the entanglement is
deduced from the time evolution of the state, an attempt to find a
direct relation (inequality) involving the initial and final
entanglements of an arbitrary bipartite two-qubit state in
presence of local noises has been undertaken in the
papers~\cite{deBrito,tiersch,konrad,zhang2010}.

Recently, a related concept of entanglement-annihilating channels
has been introduced~\cite{moravchikova}. These channels destroy
any quantum entanglement completely within the system they act on.
Following the paper~\cite{moravchikova}, we will refer to a local
two-qubit channel $\cE_1 \otimes \cE_2$ as
\textit{entanglement-annihilating} (EA) if the output state
$(\cE_1\otimes \cE_2) [\rho_{\rm in}]$ is separable for
\textit{all} input states $\rho_{\rm in}$. Therefore, the question
whether the channel is EA or not is a question whether the noise
level is acceptable for entanglement-enabled quantum applications
or not.

It is worth emphasizing the contrast between
entanglement-annihilating and entanglement-breaking channels. Let
us remind that a channel $\cE$ (acting on some system) is called
\textit{entanglement-breaking} (EB) if for all its extensions
$\cE\otimes\cI_{\rm anc}$ it annihilates the entanglement between
the system and the ancilla. In particular, a local two-qubit
channel $\cE_1 \otimes \cE_2$ is EB if the output state $\rho_{\rm
out}^{\rm +anc} = (\cE_1 \otimes \cE_2 \otimes \mathcal{I}_{\rm
anc}) [\rho_{\rm in}^{\rm +anc}]$ is disentangled with respect to
partitioning `1+2'$|$`anc' for any input state and any dimension
of the ancillary system. The entanglement-breaking channels and
their properties have been widely discussed in the literature
(see,
e.g.,~\cite{holevo,king,shor,ruskai-eb,horodecki-shor-ruskai,holevo08}).
As it was shown in Ref.~\cite{moravchikova} even if the channels
$\cE_1=\cE$ and $\cE_2=\cE$ are not EB, the channel
$\cE\otimes\cE$ can cancel any entanglement between quantum
subsystems in interest. It means that in order to fulfill an
entanglement-enabled protocol it does not suffice to know whether
the individual local influences are described by EB channel or
not, one has to resort to the concept of EA channel.

Our goal in this paper is to investigate in details the properties of
entanglement-annihilating channels for the simplest case of
two-qubits system. In Sec.~\ref{sec:properties}, we briefly review some
known properties of EA channels and add a new one for EA channels
of the form $\cE\otimes\cE$. Such channels naturally occur in
physical experiments, where two parties experience the same
influence from the environment and thus undergo the same
local transformation $\cE_1=\cE_2=\cE$. Although such channels do not
describe the general case, they are of significant physical
relevance. For the sake of convenience, if $\cE\otimes\cE$ is EA,
we will refer to a single-qubit channel $\cE$ as a
\emph{2-locally entanglement-annihilating channel} (2-LEA).

In order to demonstrate the difference between the EA and EB local
two-qubit channels, in Sec.~\ref{sec:depolarizing} we consider the
simplest and the most widely used model of quantum noise --
depolarizing channels. We find the overall noise level under which
the entanglement is destroyed for all input states or can be
preserved for some quantum states. Since depolarizing channels
belong to the class of unital channels, we further in
Sec.~\ref{sec:unital} focus our attention on this class. These
channels describe important physical processes that do not
increase purity of the states.

In Sec.~\ref{sec:non-unital} we move on to the
entanglement-annihilating behavior of non-unital channels. At
first, we address the question `which extremal single-qubit
channels $\cE_1$ and $\cE_2$ result in EA channels
$\cE_1\otimes\cE_2$?' Then, we consider amplitude-damping and
generalized amplitude-damping channels as the most prominent
representatives of non-unital channels. In Sec.~\ref{sec:extremal}
we remind that the set of all two-qubit EA channels (including
non-local ones) is convex~\cite{moravchikova}. This fact motivates
us to find EA-extremal channels and determine their position with
respect to the set of all two-qubit channels and its extreme
points. In Sec.~\ref{sec:duality} we find it quite interesting to
reveal and briefly outline an EA-duality between subsets of
local channels. Finally, we summarize the obtained
results in Sec.~\ref{sec:conclusions}.

\section{\label{sec:properties}Properties of EA-channels}

To begin with, we epitomize some basic properties of general
EA-channels found in Ref.~\cite{moravchikova}:

\begin{itemize}
\item[$1^{\circ}\,$] The set ${\sf T}_{\rm EA}$ of all EA
channels (including non-local ones) is convex.

\item[$2^{\circ}\,$] The channel is EA if and only if it
destroys entanglement of all pure input states.

\item[$3^{\circ}\,$] If $\cG_{12}$ is EA (local or non-local),
then $\cG_{12} \cdot \cF_{12}$ is EA for all two-qubit channels
$\cF_{12}$.

\item[$4^{\circ}\,$] The channels $\cE_1\otimes\cE_2$ and
$\cE_2\otimes\cE_1$ exhibit the same entanglement-annihilating
behavior.

\item[$5^{\circ}\,$] If $\cE_1$ or $\cE_2$ is EB, then
$\cE_1\otimes\cE_2$ is EA. This follows immediately from the
definitions of EA and EB channels.

\item[$6^{\circ}\,$] The channel $\cE\otimes\cI$ is EA if and only
if $\cE$ is EB.

\item[$7^{\circ}$] If $\cE$ is 2-LEA and $\cF$ is EB, then the
convex combination $\mu \cE + (1-\mu)\cF$ is 2-LEA, $\mu\in[0,1]$.

\end{itemize}

The last property was not shown in Ref.~\cite{moravchikova}. In
order to prove it recall the definition of 2-LEA channels from the
previous section. Suppose now the composite channel $\mu^2
\cE\otimes\cE + \mu(1-\mu) \cE\otimes\cF +\mu(1-\mu) \cF\otimes\cE
+ (1-\mu)^2 \cF\otimes\cF$. Then the channel $\cE\otimes\cE$ is EA
by definition of the 2-LEA channel $\cE$, the rest channels
$\cE\otimes\cF$, $\cF\otimes\cE$, and $\cF\otimes\cF$ are EA in
view of property~$5^{\circ}$. The convexity property~$1^{\circ}$
concludes the proof of property~$7^{\circ}$.

Is is worth mentioning that $\cE_1 \otimes \cE_2$ is EB if and
only if both $\cE_1$ and $\cE_2$ are EB. Thus, if the local
channel $\cE_1 \otimes \cE_2$ is EB then it is also EA by
property~$5^{\circ}$.

\section{\label{sec:depolarizing}Case study: Depolarizing channels}
The action of a depolarizing channel on $j$th qubit ($j=1,2$) is
defined as follows
\begin{equation}
\cE_j [X] = q_j X + (1-q_j) {\rm tr}[X] \frac{1}{2} I,
\end{equation}

\noindent where $q_j\in[-\frac{1}{3},1]$ and $I$ denotes the
identity operator. As a result of such noise the Bloch spheres of
individual qubits symmetrically shrink (in all directions). The
class of these channels was used in Ref.~\cite{moravchikova} to
show the existence of not entanglement-breaking 2-LEA channels. In
this section we will analyze the properties of depolarizing
channels of a slightly more general form $\cE_1\otimes\cE_2$. Each
channel $\cE_j$ is known to be EB if and only if $q_j \le
\frac{1}{3}$ (see, e.g.,~\cite{ruskai-eb}). If $q_j \le
\frac{1}{3}$ then the $j$-th qubit becomes disentangled from
arbitrary environment including the rest qubit. Therefore, the
two-qubit channel $\cE_1\otimes\cE_2$ is EB if and only if
simultaneously $q_1 \le \frac{1}{3}$ and $q_2 \le \frac{1}{3}$.

Let us find out when $\cE_1\otimes\cE_2$ is EA and destroys any
entanglement between qubits. We resort to property~$2^{\circ}$ and
consider pure input states $\omega = \ket{\psi} \bra{\psi}$. We
use the Schmidt decomposition of the state vector $\ket{\psi} =
\sqrt{p} \ket{\varphi\otimes\chi}+\sqrt{p_{\perp}}
\ket{\varphi_{\perp}\otimes\chi_{\perp}}$, where
$\{\ket{\varphi},\ket{\varphi_{\perp}}\}$ and
$\{\ket{\chi},\ket{\chi_{\perp}}\}$ are suitable orthornormal
bases of the first and the second qubit, respectively, $p$ and
$p_{\perp}$ are real nonnegative numbers such that
$p+p_{\perp}=1$.

Action of the two-qubit channel $\cE_1\otimes\cE_2$ on the state
$\omega$ yields
\begin{eqnarray}
&& \omega_{\rm out} = (\cE_1\otimes\cE_2)[\omega] = q_1 q_2 \omega
+ \frac{1}{2}
(1-q_1)q_2 I \otimes \omega_2 \nonumber\\
&& + \frac{1}{2} q_1 (1-q_2) \omega_1 \otimes I + \frac{1}{4}
(1-q_1) (1-q_2) I \otimes I, \qquad
\end{eqnarray}

\noindent with the reduced states $\omega_1 = p \ket{\varphi}
\bra{\varphi} + p_{\perp} \ket{\varphi_{\perp}}
\bra{\varphi_{\perp}}$ and $\omega_2 = p \ket{\chi} \bra{\chi} +
p_{\perp} \ket{\chi_{\perp}} \bra{\chi_{\perp}}$. According to the
Peres-Horodecki criterion~\cite{peres,horodecki96}, the output
state $\omega_{\rm out}$ is separable if the partially transposed
operator $\omega_{\rm out}^{\Gamma}$ is positive-semidefinite. The
condition $\omega_{\rm out}^{\Gamma} \ge 0$ reduces to
\begin{eqnarray}
\label{inequality}
\left|%
\begin{array}{cc}
  A+B & C \\
  C & A-B \\
\end{array}%
\right| = A^2-B^2-C^2 \ge 0,
\end{eqnarray}

\noindent where $A=1-q_1 q_2$, $B=(2p-1)(q_1 - q_2)$, and $C = 4
q_1 q_2 \sqrt{p p_{\perp}} = 4 q_1 q_2 \sqrt{p(1-p)}$. After
simplification we obtain
\begin{equation}
\nonumber (1+q_1q_2)(1-3q_1q_2) + 4 \left( p-\tfrac{1}{2}
\right)^2 \left[ 4 q_1^2 q_2^2 - (q_1-q_2)^2 \right] \ge 0.
\end{equation}
The channel in question is EA if this inequality holds for all $p
\in [0,1]$. If $2 |q_1 q_2| \geq |q_1 - q_2|$, then the minimum in
$p$ is achieved for $p=\frac{1}{2}$ and we end up with the
inequality $(1+q_1q_2)(1-3q_1q_2) \ge 0$. Taking into account that
$-\tfrac{1}{3}\leq q_1,q_2\leq 1$, this inequality holds whenever
$q_1q_2\leq \frac{1}{3}$. If $2 |q_1 q_2| < |q_1 - q_2|$, then the
minimum is achieved for $p=0$, or $p=1$, and for this case the
inequality takes the form $(1-q_1^2)(1-q_2^2)\geq 0$, which is
always satisfied for all allowed values of $q_1,q_2$.

In summary, the channel $\cE_1\otimes\cE_2$ is entanglement
annihilating if and only if $q_1 q_2 \leq \tfrac{1}{3}$. On the
contrary, such a two-qubit channel is EB only if and only if
simultaneously $q_1 \le \frac{1}{3}$ and $q_2 \le \frac{1}{3}$
(see Fig.~\ref{fig:depolarizing}).

\begin{figure}
\includegraphics[width=8.0cm]{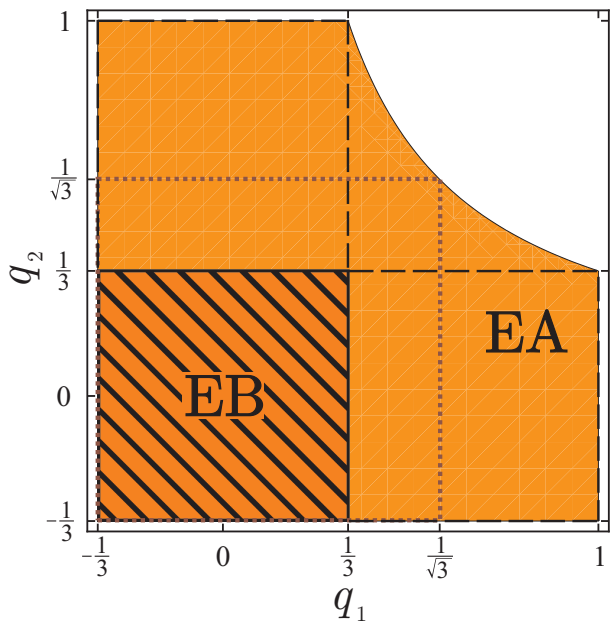}
\caption{\label{fig:depolarizing} Entanglement-annihilating (EA)
and entanglement-breaking (EB) local two-qubit channels $\cE_1
\otimes \cE_2$ composed of depolarizing channels $\cE_j [X] = q_j
X + (1-q_j) {\rm tr}[X]\frac{1}{2} I$, $j=1,2$. Dashed lines
depict regions where at least one of the channels $\cE_1$ or
$\cE_2$ is EB, and $\cE_1 \otimes\cE_2$ is EA (see
property~$5^\circ$). Dotted line depicts the region, where both
$\cE_1^2$ and $\cE_2^2$ are EB, i.e. $\cE_1 \otimes \cE_2$ is EA
(see Corollary 1).}
\end{figure}

\section{\label{sec:unital} Unital channels}
In this section we will focus our investigation on the class of
qubit unital channels. By definition a channel is unital if it
preserves the identity operator, i.e. $\cE[I]=I$. As it was shown
in Ref.~\cite{ruskai} any such channel can be expressed as a
diagonal matrix (acting on Bloch vectors) in a properly chosen
basis of self-adjoint operators $\{\sigma_0 \equiv
I,\sigma_1,\sigma_2,\sigma_3\}$, where
$\tr{\sigma_j\sigma_k}=2\delta_{jk}$ and $\delta_{jk}$ is the
conventional Kronecker delta symbol. Channels of this (diagonal)
form are also known as Pauli channels.

It follows that the channel $\cE_1 \otimes \cE_2$, where $\cE_1$
and $\cE_2$ are unital single-qubit channels, has a diagonal
matrix representation in the basis of individual Pauli operators
$\{\sigma_m \otimes \sigma'_n\}_{m,n=0}^3$. In particular, the
entries of the matrix representation of $\cE_1\otimes\cE_2$ read
\[
\frac{1}{4}\tr {\sigma_k\otimes\sigma'_l \, (\cE_1\otimes\cE_2)
[\sigma_m\otimes\sigma'_n]} = \lambda_m \lambda'_n
\delta_{km}\delta_{ln},
\]

\noindent where $\lambda_0=\lambda'_0=1$ in view of the
trace-preserving property, $\{\lambda_m\}_{m=1}^3$ and
$\{\lambda'_n\}_{n=1}^3$ are singular values of the channels
$\cE_1$ and $\cE_2$, respectively. The output state of the local
two-qubit unital channel $\cE_1\otimes\cE_2$ takes the form
\begin{equation}
\label{rho-out} \rho_{\rm out} = \frac{1}{4}\sum_{m,n=0}^{3}
\lambda_m \lambda'_n {\rm tr} [ \rho_{\rm in} \sigma_{m} \otimes
\sigma'_{n}] \sigma_{m} \otimes \sigma'_{n}.
\end{equation}

For each unital qubit channel $\cE$ we introduce a unital map
$\overline{\cE}$ with $\lambda_0 = 1$ and singular values
$\{-\lambda_m\}_{m=1}^3$, where $\{\lambda_m\}_{m=1}^3$ are
singular values of the original channel $\cE$. Note that the map
$\overline{\cE}$ is positive and trace preserving but not
necessarily completely positive.  This means that
$\overline{\cE}[\rho] \ge 0$ for all qubit density operators
$\rho$, whereas $(\overline{\cE}\otimes \cI) [\rho_{\rm in}]$ can,
in principle, have negative eigenvalues for some two-qubit density
operators $\rho_{\rm in}$.

The entanglement-annihilating behavior of local two-qubit unital
channels $\cE_1\otimes\cE_2$ is governed by the following Lemma.

\begin{lemma}
\label{lemma1} Let $\cE_1$ and $\cE_2$ be unital qubit channels.
The two-qubit channel $\cE_1\otimes\cE_2$ is EA if and only if the
maps $\overline{\cE_1} \otimes \cE_2$ and $\cE_1 \otimes
\overline{\cE_2}$ are positive.
\end{lemma}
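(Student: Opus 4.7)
The plan is to reduce the EA condition to positivity of a concrete map via the Peres--Horodecki criterion, then exploit a qubit-specific identity relating partial transposition to the Bloch-inversion $\Theta(\rho)=I-\rho$.

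First I would combine property~$2^{\circ}$ with the Peres--Horodecki criterion, which is both necessary and sufficient for separability on $2\otimes 2$. This gives: $\cE_1\otimes\cE_2$ is EA if and only if $(\cE_1\otimes\cE_2)[\rho]^{\Gamma_2}\ge 0$ for every two-qubit state $\rho$. Since $(\cdot)^{\Gamma_2}=(\cI\otimes T)(\cdot)$ with $T$ the single-qubit transpose, this is equivalent to demanding that the linear map $\cE_1\otimes(T\circ\cE_2)$ be positive.

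Next I would establish two qubit identities. (a) $T=\mathrm{Ad}_{\sigma_y}\circ\Theta$, verified by direct action on the Pauli basis: $\sigma_y\sigma_k\sigma_y=-\sigma_k$ for $k=1,3$, $\sigma_y\sigma_2\sigma_y=\sigma_2$, together with $T(\sigma_2)=-\sigma_2$ and $T(\sigma_k)=\sigma_k$ otherwise. (b) $\overline{\cE}=\Theta\circ\cE$ for every unital qubit channel $\cE$: the Bloch matrix of $\Theta$ is $\mathrm{diag}(1,-1,-1,-1)$ in any orthonormal Bloch basis, so composing with $\Theta$ negates each diagonal entry $\lambda_m$ in the basis adapted to $\cE$, matching the definition of $\overline{\cE}$. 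Chaining (a) and (b) produces $T\circ\cE_2=\mathrm{Ad}_{\sigma_y}\circ\overline{\cE_2}$ and hence
\[
\cE_1\otimes(T\circ\cE_2)=(\cI\otimes\mathrm{Ad}_{\sigma_y})\circ(\cE_1\otimes\overline{\cE_2}).
\]

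Since $\cI\otimes\mathrm{Ad}_{\sigma_y}$ is a unitary conjugation---a positive bijection with positive inverse---the left-hand map is positive if and only if $\cE_1\otimes\overline{\cE_2}$ is. Repeating the argument with partial transpose on the first qubit (or invoking property~$4^{\circ}$) yields the parallel condition for $\overline{\cE_1}\otimes\cE_2$; in fact the two conditions are automatically equivalent because $X^{\Gamma_1}$ and $X^{\Gamma_2}$ differ by the full transpose, which preserves positivity. The main subtlety I anticipate is bookkeeping of possibly different Bloch-diagonalizing frames for $\cE_1$ and $\cE_2$, but this is harmless because the identity $\overline{\cE}=\Theta\circ\cE$ is basis-independent---$\Theta$ acts as $\mathrm{diag}(1,-1,-1,-1)$ in \emph{every} orthonormal Bloch basis.
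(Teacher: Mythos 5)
Your proof is correct. It is close in spirit to the paper's but takes a different technical route: the paper invokes the reduction criterion (necessary and sufficient for two-qubit separability) and simply observes that the reduction operators $\Tr_2[\rho_{\rm out}]\otimes I-\rho_{\rm out}$ and $I\otimes\Tr_1[\rho_{\rm out}]-\rho_{\rm out}$ are precisely $(\cE_1\otimes\overline{\cE_2})[\rho_{\rm in}]$ and $(\overline{\cE_1}\otimes\cE_2)[\rho_{\rm in}]$, i.e.\ it uses the identity $\overline{\cE}=\Theta\circ\cE$ with $\Theta[X]={\rm tr}[X]I-X$ and stops there. You instead start from the PPT criterion and convert the partial transpose into the reduction map through the qubit identity $T=\mathrm{Ad}_{\sigma_y}\circ\Theta$, so in effect you re-derive, rather than cite, the equivalence of PPT and the reduction criterion for qubits; the underlying observation $\overline{\cE}=\Theta\circ\cE$ is the same in both arguments, and your basis-independence remark about $\Theta$ correctly disposes of the only real subtlety. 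What your route buys is the explicit conclusion that \emph{each} of the two positivity conditions alone is already equivalent to EA (since $X^{\Gamma_1}=(X^{\Gamma_2})^{T}$ and the full transpose preserves positivity), so the conjunction in the statement is redundant for qubits --- a point the paper leaves implicit; what it costs is the extra bookkeeping with $\mathrm{Ad}_{\sigma_y}$, which the reduction-criterion phrasing avoids entirely.
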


\begin{proof} Separability of the output state (\ref{rho-out}) of the channel $\cE_1 \otimes
\cE_2$ can be checked by the reduction
criterion~\cite{horodecki99}, which turns out to be a necessary
and sufficient separability condition for two-qubit systems.
Assuming that the channel $\cE_1\otimes\cE_2$ is
entanglement-annihilating, the two-qubit state $\rho_{\rm out}$ is
separable, thus, in accordance with the reduction crierion the
following conditions hold
\begin{equation}
\label{reduction-criterion} \Tr_2 [\rho_{\rm out}] \otimes I -
\rho_{\rm out} \ge 0 ~~ {\rm and}~~ I \otimes \Tr_1 [\rho_{\rm
out}] - \rho_{\rm out} \ge 0,
\end{equation}

\noindent where $\Tr_1 [ \cdot ]$ and $\Tr_2 [ \cdot ]$ denote
partial traces over the first and the second qubit, respectively.
Since,
\begin{eqnarray}
&& (\cE_1\otimes\overline{\cE_2}) [\varrho_{\rm in}]=\Tr_2
[\rho_{\rm
out}] \otimes I - \rho_{\rm out}\,; \nonumber\\
&& (\overline{\cE_{1}}\otimes{\cE_2}) [\varrho_{\rm in}] =I
\otimes \Tr_1 [\rho_{\rm out}] - \rho_{\rm out}\,, \nonumber
\end{eqnarray}

\noindent the above separability conditions are equivalent with
positivity of the maps $\overline{\cE_1} \otimes \cE_2$ and $\cE_1
\otimes \overline{\cE_2}$, respectively.
\end{proof}

Before we explore the consequences of Lemma~\ref{lemma1} and
derive some properties of local two-qubit unital channels
$\cE_1\otimes\cE_2$, let us make some remarks about single-qubit
unital channels. A qubit unital map $\cE$ with singular values
$\{\lambda_m\}_{m=1}^3$ is indeed a channel (i.e. completely
positive trace-preserving map) if $1+\lambda_1+\lambda_2+\lambda_3
\ge 0$, $1+\lambda_1-\lambda_2-\lambda_3 \ge 0$,
$1-\lambda_1+\lambda_2-\lambda_3 \ge 0$, and
$1-\lambda_1-\lambda_2+\lambda_3 \ge 0$. These four inequalities
define a tetrahedron in the conventional reference frame
$(\lambda_1,\lambda_2,\lambda_3)$ in $\mathbb{R}^3$ (see
Fig.~\ref{fig:sphere}). The channel $\cE$ is known to be EB if and
only if $|\lambda_1|+|\lambda_2|+|\lambda_3| \le 1$~\cite{ruskai}.
This inequality corresponds to the octahedron in
Fig.~\ref{fig:sphere}.

Let us note that $\overline{\cE}$ is a quantum channel if and only
if $\cE$ is entanglement-breaking. That is, Lemma~\ref{lemma1}
guarantees that $\cE_1\otimes\cE_2$ is EA if $\cE_1$ or $\cE_2$ is
EB, which is in agreement with the property~$5^{\circ}$.
Nevertheless, the channel $\cE_1\otimes\cE_2$ can be EA even if
neither $\cE_1$ nor $\cE_2$ is EB. The following proposition gives
a sufficient condition for $\cE_1\otimes\cE_2$ being
entanglement-annihilating.

\begin{proposition}\label{proposition1}
Suppose $\cE_1,\cE_2$ are unital qubit channels such that
$\cE_1^2$ and $\cE_2^2$ are entanglement-breaking channels, i.e.
$\sum_{m=1}^3 \lambda_m^2 \le 1$ and $\sum_{n=1}^3 {\lambda'_n}^2
\le 1$. Then $\cE_1\otimes\cE_2$ is an entanglement-annihilating
channel.
\end{proposition}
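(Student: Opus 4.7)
The plan is to apply Lemma~\ref{lemma1}, reducing the EA question for $\cE_1\otimes\cE_2$ to positivity of the two maps $\overline{\cE_1}\otimes\cE_2$ and $\cE_1\otimes\overline{\cE_2}$; by the qubit-swap symmetry (property~$4^{\circ}$) it suffices to treat the first. I would work throughout in the canonical Pauli-diagonal form of $\cE_1$ and $\cE_2$, which is available because the EA property and the hypotheses $\sum_m\lambda_m^2\le 1$ and $\sum_n(\lambda'_n)^2\le 1$ depend only on the singular values and are invariant under local unitary pre- and post-processing. The hypothesis then has a clean reinterpretation: the map $\overline{\cE_1^{\,2}}$ has singular values $(-\lambda_1^2,-\lambda_2^2,-\lambda_3^2)$, so the four tetrahedron inequalities characterising when a unital qubit map is a channel collapse to the single nontrivial requirement $\sum_m\lambda_m^2\le 1$. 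Thus ``$\cE_1^2$ is EB'' is equivalent to ``$\overline{\cE_1^{\,2}}$ is itself a channel'' (and in fact EB). Since composition of Pauli-diagonal maps multiplies singular values coordinatewise, $\overline{\cE_1^{\,2}}=\overline{\cE_1}\circ\cE_1$, so $\overline{\cE_1}\circ\cE_1$ is completely positive.

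Next, using the identity
\[
(\overline{\cE_1}\otimes\cE_2)[\rho]=I\otimes\cE_2[\Tr_1\rho]-(\cE_1\otimes\cE_2)[\rho]
\]
recorded in the proof of Lemma~\ref{lemma1}, positivity of $\overline{\cE_1}\otimes\cE_2$ becomes the operator inequality $(\cE_1\otimes\cE_2)[\rho]\le I\otimes\cE_2[\Tr_1\rho]$ for every two-qubit state $\rho$. By linearity and property~$2^{\circ}$ it suffices to test pure inputs $\rho=\ket{\psi}\bra{\psi}$ written in Schmidt form $\ket{\psi}=\sqrt{p}\ket{\varphi\otimes\chi}+\sqrt{1-p}\ket{\varphi_\perp\otimes\chi_\perp}$. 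Expressed in the Schmidt product basis the inequality takes a $2\times 2$ block form whose entries are polynomials in $\lambda_m$, $\lambda'_n$ and $\sqrt{p(1-p)}$, and positivity reduces to a discriminant-type condition in direct analogy with the calculation of Section~\ref{sec:depolarizing}.

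The main obstacle I anticipate is the misalignment between the Schmidt frame of $\ket{\psi}$ and the Pauli frames in which $\cE_1,\cE_2$ are diagonal: the off-diagonal Schmidt block carries cross-terms of the form $\lambda_m\lambda'_n$ whose modulus must be dominated by the diagonal contribution. I expect this to be handled by a Cauchy--Schwarz-type estimate in which the two hypotheses $\sum_m\lambda_m^2\le 1$ and $\sum_n(\lambda'_n)^2\le 1$ combine multiplicatively; as a sanity check, in the isotropic case $\lambda_m=q_1$, $\lambda'_n=q_2$ this should reproduce the threshold $q_1q_2\le 1/3$ already derived for depolarizing channels in Section~\ref{sec:depolarizing}. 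The symmetric positivity of $\cE_1\otimes\overline{\cE_2}$ follows by the identical argument with the roles of the two qubits exchanged, and Lemma~\ref{lemma1} then concludes the proof.
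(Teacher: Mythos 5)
Your setup coincides with the paper's: both invoke Lemma~\ref{lemma1}, reduce to pure inputs in Schmidt form, and aim to close the argument with a Cauchy--Schwarz estimate. But the proposal stops exactly where the real work begins. The step you describe as ``the main obstacle I anticipate'' --- controlling the misalignment between the Schmidt frame of $\ket{\psi}$ and the Pauli frames diagonalizing $\cE_1,\cE_2$ --- is the entire content of the proof, and you do not resolve it; you only express the expectation that a Cauchy--Schwarz-type bound will work. Moreover, your description of how it would work is not right: for generic Schmidt bases the operator $(\overline{\cE_1}\otimes\cE_2)[\ket{\psi}\bra{\psi}]$ is a full $4\times4$ Hermitian matrix, not a $2\times2$ block whose positivity is a single discriminant condition ``in direct analogy with Section~\ref{sec:depolarizing}.'' That analogy holds only because depolarizing channels are isotropic, so their diagonal frame can be aligned with any Schmidt frame. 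The paper's actual mechanism is to write the output operator explicitly as in Eq.~(\ref{output-unital}) in terms of three vectors ${\bf n},{\bf k},{\bf l}$ (and primed counterparts) depending on the Schmidt angles, to observe the key identity $|{\bf n}|^2+|{\bf k}|^2+|{\bf l}|^2=\sum_m\lambda_m^2$ of Eq.~(\ref{n-k-l-sum}) --- which is precisely how the hypothesis enters --- and then to verify positivity of all expectation values through the chain of inequalities~(\ref{inequalities}), using the trigonometric substitution $p-p_\perp=\cos\alpha$, $2\sqrt{pp_\perp}=\sin\alpha$ to eliminate the Schmidt weight. None of this structure is identified in your proposal.

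Two smaller points. First, your observation that $\overline{\cE_1}\circ\cE_1$ is completely positive (equivalently, that $\cE_1^2$ is EB) is correct but is never used downstream; the paper does not need it either. Second, your sanity check is miscalibrated: in the depolarizing case the hypothesis $\sum_m\lambda_m^2\le1$ gives $q_1\le1/\sqrt3$ and $q_2\le1/\sqrt3$, i.e.\ the square region, not the full hyperbolic region $q_1q_2\le1/3$. Proposition~\ref{proposition1} is only a sufficient condition (as the paper states explicitly after the proof), so a correct proof will \emph{not} reproduce the threshold $q_1q_2\le1/3$; expecting it to do so suggests a misreading of what is being proved.
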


\begin{proof}
Consider the map $\overline{\cE_1}\otimes\cE_2$. Let us
demonstrate that $(\overline{\cE_1}\otimes\cE_2)[\rho_{\rm in}]
\ge 0$ for all two-qubit input states $\rho_{\rm in}$. In view of
convexity of the state space it suffices to show that
$(\overline{\cE_1}\otimes\cE_2)[\ket{\psi}\bra{\psi}] \ge 0$ for
all pure two-qubit states $\ket{\psi}$.

Any state $\ket{\psi}$ is given by its Schmidt decomposition
$\sqrt{p} \ket{\varphi\otimes\chi}+\sqrt{p_{\perp}}
\ket{\varphi_{\perp}\otimes\chi_{\perp}}$, where $p,p_{\perp} \ge
0$ and $p+p_{\perp}=1$, the orthonormal basis
$\{\ket{\varphi},\ket{\varphi_{\perp}}\}$ can be parameterized by
the angles $\theta\in[0,\pi]$ and $\phi\in[0,2\pi]$ as follows:
\begin{eqnarray}
&& \label{basis-vector} \ket{\varphi} = \left(%
\begin{array}{c}
  \cos (\theta \!/ 2) \exp (-i \phi \!/ 2) \\
  \sin (\theta \!/ 2) \exp (i \phi \!/ 2) \\
\end{array}%
\right), \\
&& \label{basis-vector-perp} \ket{\varphi_{\perp}} = \left(%
\begin{array}{l}
  -\sin (\theta \!/ 2) \exp (-i \phi \!/ 2) \\
  \cos (\theta \!/ 2) \exp (i \phi \!/ 2) \\
\end{array}%
\right),
\end{eqnarray}

\noindent and the basis $\{\ket{\chi},\ket{\chi_{\perp}}\}$ is
obtained from formulas
(\ref{basis-vector})--(\ref{basis-vector-perp}) by replacing
$\ket{\varphi} \rightarrow \ket{\chi}$, $\ket{\varphi_\perp} \rightarrow
\ket{\chi_{\perp}}$, $\theta \rightarrow \theta'$, and $\phi
\rightarrow \phi'$.

The map $\overline{\cE_1}\otimes\cE_2$ transforms
$\ket{\psi}\bra{\psi}$ into the operator
\begin{eqnarray}
&& \label{output-unital} \!\!\!\!\! \tfrac{1}{4} \big\{ I
\!\otimes\! I \! - \! ({\bf n} \!\cdot\! \boldsymbol{\sigma}\!)
\!\otimes\! ({\bf n}'\! \!\cdot\! \boldsymbol{\sigma}'\!) \! - \!
(p \!-\! p_{\perp}) \left[ ({\bf n} \!\cdot\!
\boldsymbol{\sigma}\!) \!\otimes\! I - I\!\otimes\! ({\bf n}'\!
\!\cdot\!
\boldsymbol{\sigma}'\!) \right] \nonumber\\
&& \!\!\!\!\! -  2\sqrt{pp_{\perp}} \left[ ({\bf k} \!\cdot\!
\boldsymbol{\sigma}\!) \!\otimes\! ({\bf k}'\! \!\cdot\!
\boldsymbol{\sigma}'\!) \! - \! ({\bf l} \!\cdot\!
\boldsymbol{\sigma}\!) \!\otimes\! ({\bf l}'\! \!\cdot\!
\boldsymbol{\sigma}'\!) \right] \big\},
\end{eqnarray}

\noindent where $({\bf n} \cdot \boldsymbol{\sigma}) = n_1
\sigma_1 + n_2 \sigma_2 + n_3 \sigma_3$ and vectors ${\bf n},{\bf
k},{\bf l} \in \mathbb{R}^3$ are expressed through singular values
$\{\lambda_m\}_{m=1}^3$ of the channel $\cE_1$ by formulas
\begin{eqnarray}
&& \label{vector-3d-n} {\bf n} = (\lambda_1 \cos\phi
\sin\theta,\lambda_2 \sin\phi
\sin\theta, \lambda_3 \cos\theta), \\
&& \label{vector-3d-k} {\bf k} = (-\lambda_1 \cos\phi
\cos\theta,-\lambda_2 \sin\phi
\cos\theta, \lambda_3 \sin\theta), \\
&& \label{vector-3d-l} {\bf l} = (\lambda_1 \sin\phi,-\lambda_2
\cos\phi,0).
\end{eqnarray}

\noindent The vectors ${\bf n}',{\bf k}',{\bf l}'$ are obtained
from (\ref{vector-3d-n}), (\ref{vector-3d-k}),
(\ref{vector-3d-l}), respectively, by replacing
$\boldsymbol{\lambda} \rightarrow \boldsymbol{\lambda}'$, $\theta
\rightarrow \theta'$, and $\phi \rightarrow \phi'$.

Both sets of vectors $\{{\bf n}, {\bf k}, {\bf l}\}$ and $\{{\bf
n}', {\bf k}', {\bf l}'\}$ have a particular property
\begin{eqnarray}
&& \label{n-k-l-sum} |{\bf n}|^2 + |{\bf k}|^2 + |{\bf l}|^2 =
\textstyle{
\sum_{m=1}^3 \lambda_m^2 } \le 1, \\
&& \label{n-k-l-prime-sum} |{\bf n}'|^2 + |{\bf k}'|^2 + |{\bf
l}'|^2 = \textstyle{ \sum_{n=1}^3 {\lambda'_n}^2 } \le 1
\end{eqnarray}

\noindent thanks to the statement of the proposition.

The output state (\ref{output-unital}) is positive semi-definite
if and only if its average $\ave{\rho_{\rm out}} \ge 0$ for all
two-qubit states. In other words, we want to show that the
inequality
\begin{eqnarray}
&& \label{output-ave-ineq-1} \!\!\!\!\! \big\langle I \!\otimes\!
I \! - \! ({\bf n} \!\cdot\! \boldsymbol{\sigma}\!) \!\otimes\!
({\bf n}'\! \!\cdot\! \boldsymbol{\sigma}'\!) \big\rangle  \ge  (p
\!-\! p_{\perp}) \big\langle  ({\bf n} \!\cdot\!
\boldsymbol{\sigma}\!) \!\otimes\! I - I\!\otimes\! ({\bf n}'\!
\!\cdot\!
\boldsymbol{\sigma}'\!) \big\rangle \nonumber\\
&& \!\!\!\!\! + 2\sqrt{pp_{\perp}} \big\langle ({\bf k} \!\cdot\!
\boldsymbol{\sigma}\!) \!\otimes\! ({\bf k}'\! \!\cdot\!
\boldsymbol{\sigma}'\!) \! - \! ({\bf l} \!\cdot\!
\boldsymbol{\sigma}\!) \!\otimes\! ({\bf l}'\! \!\cdot\!
\boldsymbol{\sigma}'\!) \big\rangle
\end{eqnarray}

\noindent holds true for any averaging state. Since
$(p-p_{\perp})^2 + (2\sqrt{p p_{\perp}})^2 = (p+p_{\perp})^2 = 1$,
one can treat $(p-p_{\perp})$ as $\cos \alpha$ and $2\sqrt{p
p_{\perp}}$ as $\sin \alpha$. Due to the fact that $\max_{\alpha}
(A \cos\alpha + B \sin\alpha) = \sqrt { A^2 + B^2}$, the
inequality (\ref{output-ave-ineq-1}) if fulfilled whenever
\begin{eqnarray}
&& \label{output-ave-ineq-2} \big\langle I \!\otimes\! I \! - \!
({\bf n} \!\cdot\! \boldsymbol{\sigma}\!) \!\otimes\! ({\bf n}'\!
\!\cdot\! \boldsymbol{\sigma}'\!) \big\rangle^2  \ge  \big\langle
({\bf n} \!\cdot\! \boldsymbol{\sigma}\!) \!\otimes\! I -
I\!\otimes\! ({\bf n}'\! \!\cdot\!
\boldsymbol{\sigma}'\!) \big\rangle^2 \nonumber\\
&& + \big\langle ({\bf k} \!\cdot\! \boldsymbol{\sigma}\!)
\!\otimes\! ({\bf k}'\! \!\cdot\! \boldsymbol{\sigma}'\!) \! - \!
({\bf l} \!\cdot\! \boldsymbol{\sigma}\!) \!\otimes\! ({\bf l}'\!
\!\cdot\! \boldsymbol{\sigma}'\!) \big\rangle^2
\end{eqnarray}

\noindent or, equivalently,
\begin{eqnarray}
&& \label{output-ave-ineq-3} \!\!\!\!\!\! \big\langle \big( I
\!+\! ({\bf n} \!\cdot\! \boldsymbol{\sigma}\!) \big) \!\otimes\!
\big( I \!-\! ({\bf n}'\! \!\cdot\! \boldsymbol{\sigma}'\!) \big)
\big\rangle \big\langle \big( I \!-\! ({\bf n} \!\cdot\!
\boldsymbol{\sigma}\!) \big) \!\otimes\! \big( I \!+\! ({\bf n}'\!
\!\cdot\!
\boldsymbol{\sigma}'\!) \big) \big\rangle \ge \nonumber\\
&& \!\!\!\!\!\! \big\langle ({\bf k} \!\cdot\!
\boldsymbol{\sigma}\!) \!\otimes\! ({\bf k}'\! \!\cdot\!
\boldsymbol{\sigma}'\!) \! - \! ({\bf l} \!\cdot\!
\boldsymbol{\sigma}\!) \!\otimes\! ({\bf l}'\! \!\cdot\!
\boldsymbol{\sigma}'\!) \big\rangle^2.
\end{eqnarray}

Taking into account that $\ket{\varphi \otimes \chi}$,
$\ket{\varphi \otimes \chi_{\perp}}$, $\ket{\varphi_{\perp}
\otimes \chi}$, and $\ket{\varphi_{\perp} \otimes \chi_{\perp}}$
are all eigenvectors of the operators in the left hand side of
(\ref{output-ave-ineq-3}) and using the Cauchy-Schwarz inequality
(CS) in the form $\langle X^{\dag} X \rangle\langle Y^{\dag} Y
\rangle\ge |\langle X^{\dag} Y\rangle|^2$, one can readily see
that
\begin{eqnarray}
&& \label{inequalities}\!\!\!\!\!\! \big\langle \big( I \!+\!
({\bf n} \!\cdot\! \boldsymbol{\sigma}\!) \big) \!\otimes\! \big(
I \!-\! ({\bf n}'\! \!\cdot\! \boldsymbol{\sigma}'\!) \big)
\big\rangle \big\langle \big( I \!-\! ({\bf n} \!\cdot\!
\boldsymbol{\sigma}\!) \big) \!\otimes\! \big( I \!+\! ({\bf n}'\!
\!\cdot\!
\boldsymbol{\sigma}'\!) \big) \big\rangle \!\!\! \underset{\rm CS}{\ge} \nonumber\\
&& \!\!\!\!\!\! (1-|{\bf n}|^2)(1-|{\bf n}'|^2) \underset{\rm
(\ref{n-k-l-sum}),(\ref{n-k-l-prime-sum})}{\ge} (|{\bf k}|^2 +
|{\bf l}|^2)(|{\bf k}'|^2 + |{\bf l}'|^2) {\ge} \nonumber\\
&& \!\!\!\!\!\! \big( |{\bf k}| |{\bf k}'|+|{\bf l}| |{\bf l}'|
\big)^2 \ge \big\langle ({\bf k} \!\cdot\! \boldsymbol{\sigma}\!)
\!\otimes\! ({\bf k}'\! \!\cdot\! \boldsymbol{\sigma}'\!) \! - \!
({\bf l} \!\cdot\! \boldsymbol{\sigma}\!) \!\otimes\! ({\bf l}'\!
\!\cdot\! \boldsymbol{\sigma}'\!) \big\rangle^2.
\end{eqnarray}

Thus,
(\ref{n-k-l-sum})$\wedge$(\ref{n-k-l-prime-sum})$\Rightarrow$(\ref{inequalities})$\Rightarrow$(\ref{output-ave-ineq-3})$\Leftrightarrow$(\ref{output-ave-ineq-2})$\Rightarrow$(\ref{output-ave-ineq-1})$\Rightarrow$
$\overline{\cE_1}\otimes\cE_2$ is a positive map. In the same way,
$\cE_1\otimes\overline{\cE_2}$ is also a positive map. According
to Lemma~\ref{lemma1}, the channel $\cE_1 \otimes \cE_2$ is EA.
\end{proof}

Note that Proposition~\ref{proposition1} provides the sufficient
but not necessary condition for the channel $\cE_1 \otimes \cE_2$
to be EA. For instance, in case of two depolarizing channels
$\cE_1$ and $\cE_2$ from Sec.~\ref{sec:depolarizing}, the channels
$\cE_1^2$ and $\cE_2^2$ are EB if $q_1 \le \frac{1}{\sqrt{3}}$ and
$q_2 \le \frac{1}{\sqrt{3}}$, respectively. It means that
$\cE_1\otimes\cE_2$ is EA if $q_1,q_2 \le \frac{1}{\sqrt{3}}$. The
corresponding area of parameters $(q_1,q_2)$ is depicted in
Fig.~\ref{fig:depolarizing}, which illustrates the `power' of
Proposition~\ref{proposition1}. However, in case of identical
environments, i.e. $\cE_1=\cE_2=\cE$, the use of
Proposition~\ref{proposition1} provides sufficient and necessary
condition for $\cE\otimes\cE$ to be EA.

\begin{proposition}
\label{proposition2}
A unital qubit channel $\cE$ is 2-LEA if and only if the channel
$\cE^2$ is EB, i.e. $\lambda_1^2+\lambda_2^2+\lambda_3^2 \le 1$.
\end{proposition}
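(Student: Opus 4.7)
The plan is to derive the statement from Proposition~\ref{proposition1} together with Lemma~\ref{lemma1}. The sufficiency (``if'') direction is essentially free: specializing Proposition~\ref{proposition1} to $\cE_1=\cE_2=\cE$ shows that $\sum_{m=1}^{3}\lambda_m^2\le 1$ (equivalently, $\cE^2$ being EB) already forces $\cE\otimes\cE$ to be EA, so $\cE$ is 2-LEA.

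For the converse I would argue by contrapositive: assume $\sum_m \lambda_m^2 > 1$ and exhibit a single two-qubit input state $\rho_{\rm in}$ whose image under $\overline{\cE}\otimes\cE$ has a strictly negative eigenvalue. By Lemma~\ref{lemma1} this immediately rules out $\cE\otimes\cE$ being EA, hence rules out $\cE$ being 2-LEA. The natural witness is the maximally entangled state $\ket{\Phi^+}=\frac{1}{\sqrt{2}}(\ket{00}+\ket{11})$, since it maximizes the diagonal Pauli correlators $\tr{\rho_{\rm in}\,\sigma_m\otimes\sigma_m}$ that drive the output formula~(\ref{rho-out}) and it is precisely the state saturating the reduction criterion invoked in the proof of Lemma~\ref{lemma1}.

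Plugging $\ket{\Phi^+}\bra{\Phi^+}$ into~(\ref{rho-out}) (with $\overline{\cE}$ carrying Bloch-sector parameters $-\lambda_m$) leaves only the four diagonal contributions $(m,n)\in\{(0,0),(1,1),(2,2),(3,3)\}$, and $\ket{\Phi^+}$ is an eigenvector of $\sigma_m\otimes\sigma_m$ with eigenvalues $+1,+1,-1,+1$ respectively. A short computation then gives
\[
\bra{\Phi^+}(\overline{\cE}\otimes\cE)[\ket{\Phi^+}\bra{\Phi^+}]\ket{\Phi^+}=\tfrac{1}{4}\bigl(1-\lambda_1^2-\lambda_2^2-\lambda_3^2\bigr),
\]
which is strictly negative by the standing assumption. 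Hence $\overline{\cE}\otimes\cE$ is not a positive map, and Lemma~\ref{lemma1} concludes that $\cE\otimes\cE$ is not EA.

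I do not anticipate a real obstacle in this plan: the genuine technical work has already been absorbed into Proposition~\ref{proposition1} (a Cauchy--Schwarz style estimate) and Lemma~\ref{lemma1} (which converts the EA condition into positivity of the auxiliary map $\overline{\cE}\otimes\cE$ via the reduction criterion). The only independent step is selecting the witness, and $\ket{\Phi^+}$ is essentially forced by the symmetry of the Pauli parametrization, producing the tight threshold $\sum_m\lambda_m^2=1$ as soon as one evaluates the output's expectation value on the same Bell state.
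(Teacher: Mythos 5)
Your proof is correct and follows essentially the same route as the paper: sufficiency is delegated to Proposition~\ref{proposition1}, and necessity is witnessed by a maximally entangled state. The only cosmetic difference is that the paper applies the Peres--Horodecki (PPT) criterion to $(\cE\otimes\cE)[\ket{\psi_{+}}\bra{\psi_{+}}]$ directly, whereas you route through Lemma~\ref{lemma1} and evaluate a single expectation value of $(\overline{\cE}\otimes\cE)[\ket{\Phi^{+}}\bra{\Phi^{+}}]$; for two qubits these criteria are equivalent and both yield the same threshold $\lambda_1^2+\lambda_2^2+\lambda_3^2=1$.
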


\begin{proof}
The condition \{$\cE^2$ is EB\} is sufficient due to
Proposition~\ref{proposition1}. It turns out to be also necessary
if we consider Bell states, e.g., $\ket{\psi_{+}} =
\frac{1}{\sqrt{2}} (\ket{00}+\ket{11})$. It is readily seen that
the operator $\left( (\cE\otimes\cE) [\ket{\psi}\bra{\psi}]
\right)^{\Gamma}$ is positive semi-definite if and only if
$\lambda_1^2+\lambda_2^2+\lambda_3^2 \le 1$. On the other hand,
$\{\lambda_i^2\}_{i=1}^{3}$ are eigenvalues of $\cE^2$ and their
sum is less or equal than 1, i.e. $\cE^2$ is EB~\cite{ruskai-eb}.
\end{proof}

An illustration of 2-LEA unital channels $\cE$ in the conventional
reference frame $\{\lambda_1,\lambda_2,\lambda_3\}$ is presented
in Fig.~\ref{fig:sphere}. Points outside the tetrahedron do not
satisfy the complete positivity of $\cE$. Points outside the
sphere do not correspond to 2-LEA channels, which can be easily
checked by Bell states. Both the sphere and tetrahedron comprise
an octahedron of EB single-qubit channels $\cE$ with $ |\lambda_1|
+ |\lambda_2|+|\lambda_3| \le 1$. Since intersection of the sphere
and tetrahedron is a convex body, we have just revealed an
interesting property:

\begin{proposition}
\label{proposition3}
The set of 2-LEA unital qubit channels
is convex, i.e. if $\cE_1$ and $\cE_2$ are 2-LEA, then
$\mu\cE_1+(1-\mu)\cE_2$ is also 2-LEA for all $\mu\in[0,1]$.
\end{proposition}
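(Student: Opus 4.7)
The plan is to combine Propositions~\ref{proposition1} and~\ref{proposition2} with the convexity of the set of all EA two-qubit channels (property~$1^{\circ}$). Let $\cE_1$ and $\cE_2$ be 2-LEA unital qubit channels and set $\cE_{\mu}:=\mu\cE_1+(1-\mu)\cE_2$ for some $\mu\in[0,1]$. Since a convex combination of unital channels is unital, $\cE_{\mu}$ is itself a unital qubit channel, and the claim reduces to showing that the two-qubit channel $\cE_{\mu}\otimes\cE_{\mu}$ is EA.

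The key step is the bilinear expansion $\cE_{\mu}\otimes\cE_{\mu}=\mu^{2}\,\cE_{1}\otimes\cE_{1}+\mu(1-\mu)\bigl(\cE_{1}\otimes\cE_{2}+\cE_{2}\otimes\cE_{1}\bigr)+(1-\mu)^{2}\,\cE_{2}\otimes\cE_{2}$. The diagonal terms $\cE_{j}\otimes\cE_{j}$ are EA by the 2-LEA hypothesis on $\cE_{1}$ and $\cE_{2}$. For the cross terms I would invoke Proposition~\ref{proposition2}, which gives that both $\cE_{1}^{2}$ and $\cE_{2}^{2}$ are EB; this is exactly the hypothesis Proposition~\ref{proposition1} requires for the pair $(\cE_{1},\cE_{2})$, so $\cE_{1}\otimes\cE_{2}$ is EA, and property~$4^{\circ}$ transports the same conclusion to $\cE_{2}\otimes\cE_{1}$. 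All four summands being EA, property~$1^{\circ}$ closes the argument.

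I do not anticipate a serious obstacle here: the nontrivial analytical work has already been absorbed into Proposition~\ref{proposition1}, and the present statement amounts to a short algebraic assembly on top of it. There is also a purely geometric reading suggested by Fig.~\ref{fig:sphere}, namely that the 2-LEA unital channels correspond to the intersection of the Bloch ball $\sum_{m}\lambda_{m}^{2}\le 1$ with the tetrahedron of completely positive unital channels, so convexity follows because the intersection of two convex bodies is convex. That picture, however, directly handles only channels sharing a common diagonalizing basis (Pauli channels), whereas the algebraic route above covers arbitrary unital qubit channels whose singular-value frames need not be aligned; for that reason I would present the proof in the algebraic form.
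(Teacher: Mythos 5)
Your proof is correct and follows essentially the same route as the paper's: the bilinear expansion of $\cE_{\mu}\otimes\cE_{\mu}$, Proposition~\ref{proposition2} to get that $\cE_1^2$ and $\cE_2^2$ are EB, Proposition~\ref{proposition1} for the cross terms, and property~$1^{\circ}$ to conclude. Your added caveat that the geometric picture in Fig.~\ref{fig:sphere} presumes a common diagonalizing basis is a fair refinement of the paper's remark that the claim is ``evident from geometrical consideration,'' but it does not change the argument.
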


\begin{proof}
Despite this fact is evident from geometrical consideration, it
also follows from the relation $\big(\mu\cE_1+(1-\mu)\cE_2\big)
\otimes \big(\mu\cE_1+(1-\mu)\cE_2 \big) = \mu^2 \cE_1\otimes\cE_1
+ \mu(1-\mu)\cE_1\otimes\cE_2 + (1-\mu)\mu \cE_2\otimes\cE_1 +
(1-\mu)^2 \cE_2\otimes\cE_2$. As channels $\cE_1\otimes\cE_1$ and
$\cE_2\otimes\cE_2$ are EA by a statement of the involved
Proposition, the channels $\cE_1^2$ and $\cE_2^2$ are EB by
Proposition~\ref{proposition2}. Due to
Proposition~\ref{proposition1} both $\cE_1\otimes\cE_2$ and
$\cE_2\otimes\cE_1$ are EA. Then we use property~$1^{\circ}$ to
conclude the proof.
\end{proof}

One more property immediately follows from Propositions
\ref{proposition1} and \ref{proposition2}:

\begin{corollary}
If $\cE_1$ and $\cE_2$ are unital
2-LEA qubit channels, then $\cE_1\otimes\cE_2$ is EA.
\end{corollary}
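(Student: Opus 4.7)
The plan is to chain together Propositions \ref{proposition1} and \ref{proposition2}, both of which have already been established in the excerpt, and the statement follows in essentially one step with no further computation. So this is less of a proof to invent and more of a direct combination of existing tools.

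First I would apply Proposition \ref{proposition2} to each of $\cE_1$ and $\cE_2$ separately. Since each is assumed to be a unital 2-LEA qubit channel, Proposition \ref{proposition2} converts the 2-LEA hypothesis into the statement that $\cE_1^2$ and $\cE_2^2$ are entanglement-breaking. Equivalently, in terms of singular values this means $\lambda_1^2+\lambda_2^2+\lambda_3^2\le 1$ and ${\lambda'_1}^2+{\lambda'_2}^2+{\lambda'_3}^2\le 1$, which is precisely the hypothesis appearing in Proposition \ref{proposition1}.

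Next I would feed this into Proposition \ref{proposition1}, which asserts that whenever $\cE_1^2$ and $\cE_2^2$ are entanglement-breaking, the local two-qubit channel $\cE_1\otimes\cE_2$ is entanglement-annihilating. This closes the argument immediately.

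There is no real obstacle here: the hard analytic work (the Cauchy--Schwarz/Bloch-vector estimate) is entirely contained in Proposition \ref{proposition1}, and the equivalence between 2-LEA and the squared-singular-values inequality is contained in Proposition \ref{proposition2}. The only mild subtlety worth flagging in the write-up is that Proposition \ref{proposition1} allows the two unital channels to be distinct, which is precisely why we need Proposition \ref{proposition1} rather than just Proposition \ref{proposition2} applied to a single channel; but no additional argument is required.
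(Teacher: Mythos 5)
Your proof is correct and follows exactly the route the paper intends: the corollary is stated there as an immediate consequence of Propositions~\ref{proposition1} and~\ref{proposition2}, obtained by converting the 2-LEA hypothesis on each channel into entanglement-breaking of its square and then invoking Proposition~\ref{proposition1}. Nothing is missing.
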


We can make an interesting observation from Fig.~\ref{fig:sphere},
namely, phase-damping channels (forming the edges of the
tetrahedron) preserve entanglement unless they contract the whole
Bloch sphere into a line.

\begin{figure}
\includegraphics[width=8.50cm]{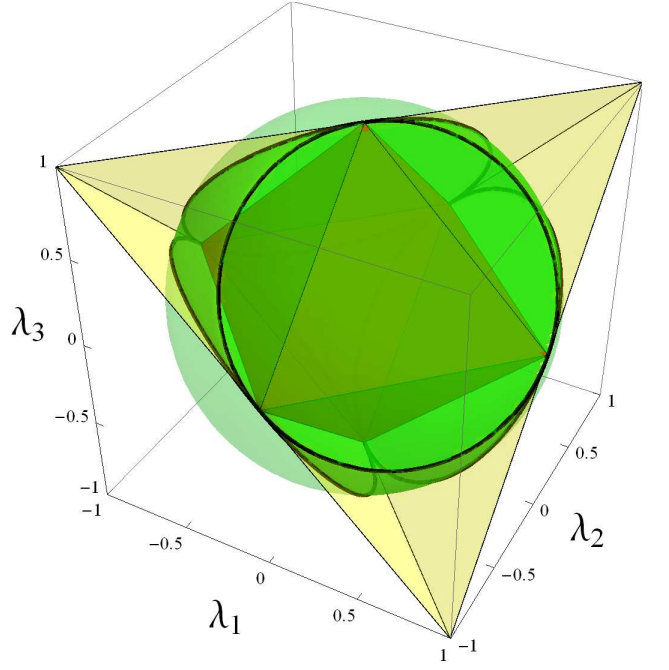}
\caption{\label{fig:sphere} Intersection of the tetrahedron and
the ball is a set of 2-LEA unital qubit channels $\cE$, i.e. such
channels $\cE$ that $\cE\otimes\cE$ is EA. The octahedron inside
the sphere represents EB qubit channels.}
\end{figure}

Let us now analyze unital channels $\cE_1 \otimes \cE_2$ such that
neither $\cE_1$, nor $\cE_2$ is a 2-LEA channel (i.e. both $\cE_1$
and $\cE_2$ are outside the sphere in Fig.~\ref{fig:sphere}).
Surprisingly, it turns out that $\cE_1 \otimes \cE_2$ can still be
EA as it is demonstrated by the following example.

\begin{example}
Suppose $\cE_1={\rm diag}\{1,\frac{1}{20},\frac{1}{20},1\}$ and
$\cE_2={\rm diag}\{1,\frac{2}{3},\frac{2}{3},\frac{2}{3}\}$. Let
us demonstrate that $\cE_1 \otimes \cE_2$ is EA although neither
$\cE_1$, nor $\cE_2$ is 2-LEA. We note that the channel in
question can be represented in the form $\cE_1 \otimes \cE_2 = \cG
\cdot \cF$, where $\cG = \frac{3}{4} \cG_1\otimes\cI + \frac{1}{4}
\cI\otimes\cG_2$, $\cG_1={\rm diag}\{1,0,0,1\}$, $\cG_2={\rm
diag}\{1,\frac{1}{3},\frac{1}{3},\frac{1}{3}\}$, and the two-qubit
map $\cF$ is defined by its matrix representation $\frac{1}{4}\tr
{\sigma_k\otimes\sigma'_l \cF[\sigma_m\otimes\sigma'_n]} = F_{mn}
\delta_{km}\delta_{ln}$ with $F_{0n}=F_{3n}=(4+\delta_{n0})/5$,
$F_{1n}=F_{2n}=(2-\delta_{n0})/5$. By considering the Choi matrix
of the map $\cF$, it is not hard to see that $\cF$ is a channel
indeed, i.e. completely positive trace-preserving map. As both
$\cG_1$ and $\cG_2$ are EB, the channels $\cG_1\otimes\cI$ and
$\cI\otimes\cG_2$ are EA by property $5^{\circ}$. Hence $\cG$ is
EA by property $1^{\circ}$ and $\cG \cdot \cF$ is EA by property
$3^{\circ}$. The equality $\cE_1 \otimes \cE_2 = \cG \cdot \cF$
completes the proof.
\end{example}

A natural question to ask is under which conditions on $\cE_1$ and
$\cE_2$ the channel $\cE_1\otimes\cE_2$ is entanglement-annihilating.
The following proposition  formulates a sufficient condition for the
converse statement.

\begin{proposition}
\label{proposition4}
Consider qubit unital channels $\cE_1$ and $\cE_2$ with singular
values $\boldsymbol{\lambda} = (\lambda_1,\lambda_2,\lambda_3)$
and $\boldsymbol{{\lambda}'} =
(\lambda'_1,\lambda'_2,\lambda'_3)$, respectively. If
$\boldsymbol{\lambda} \cdot \boldsymbol{{\lambda}'}
>1$, then $\cE_1\otimes\cE_2$ is not entanglement-annihilating channel.
\end{proposition}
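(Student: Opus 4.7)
The plan is to use property $2^\circ$ and exhibit a pure input state whose image under $\cE_1\otimes\cE_2$ has a negative partial transpose; by the Peres--Horodecki criterion this shows the output is entangled, so the channel cannot be EA. Since EA-behavior is invariant under conjugation by local unitaries, I may first absorb the local unitaries from the Ruskai canonical form and assume without loss of generality that $\cE_1$ and $\cE_2$ are both diagonal in the Pauli basis, with signed eigenvalues $\mu_m,\mu'_n$ satisfying $|\mu_m|=\lambda_m$ and $|\mu'_n|=\lambda'_n$. In this form (\ref{rho-out}) becomes particularly transparent for Bell-state inputs.

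Next I would test the four Bell states $|\Phi^\pm\rangle,|\Psi^\pm\rangle$ as inputs. Each Bell state is Pauli-diagonal with correlation triple $(s_1,s_2,s_3)\in\{(+,-,+),(-,+,+),(+,+,-),(-,-,-)\}$, so by (\ref{rho-out}) the output is again Bell-diagonal with correlation triple $T_k=(s_1\mu_1\mu'_1,\,s_2\mu_2\mu'_2,\,s_3\mu_3\mu'_3)$. Using the standard spectrum of a Bell-diagonal state and the fact that partial transposition flips the sign of $t_2$, the four eigenvalues of $\rho_{\rm out}^{\Gamma}$ are
\begin{equation*}
\tfrac{1}{4}\bigl(1+\varepsilon_1 T_k^{(1)}+\varepsilon_2 T_k^{(2)}+\varepsilon_3 T_k^{(3)}\bigr),
\end{equation*}
with $(\varepsilon_1,\varepsilon_2,\varepsilon_3)$ ranging over the sign triples having an even number of minuses (the precise triple depends only on the Peres partial transpose structure).

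The key combinatorial step is that as $k$ ranges over the four Bell states and $(\varepsilon_1,\varepsilon_2,\varepsilon_3)$ ranges over the four allowed sign triples, the combined sign vector $(s_1\varepsilon_1,s_2\varepsilon_2,s_3\varepsilon_3)$ sweeps out all eight sign triples in $\{\pm1\}^3$. In particular I can choose the Bell input together with the appropriate eigenvalue of $\rho_{\rm out}^\Gamma$ so that every term $s_i\varepsilon_i\mu_i\mu'_i$ is equal to $-|\mu_i\mu'_i|=-\lambda_i\lambda'_i$. This produces an eigenvalue of $\rho_{\rm out}^\Gamma$ equal to
\begin{equation*}
\tfrac{1}{4}\bigl(1-\lambda_1\lambda'_1-\lambda_2\lambda'_2-\lambda_3\lambda'_3\bigr)=\tfrac{1}{4}\bigl(1-\boldsymbol{\lambda}\cdot\boldsymbol{\lambda'}\bigr).
\end{equation*}

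Under the hypothesis $\boldsymbol{\lambda}\cdot\boldsymbol{\lambda'}>1$ this eigenvalue is strictly negative, so $\rho_{\rm out}^\Gamma\not\ge 0$ and $\rho_{\rm out}$ is entangled; hence $\cE_1\otimes\cE_2$ is not EA. The only delicate point is the bookkeeping of signs in the previous paragraph---making sure that over the four Bell inputs the realizable sign patterns really do exhaust $\{\pm1\}^3$---but this is a finite check on the four Bell correlation triples listed above together with the four PPT sign triples, and once it is established the rest of the argument is immediate.
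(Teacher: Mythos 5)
Your overall strategy --- feed a maximally entangled state into $\cE_1\otimes\cE_2$, observe that the output is Bell-diagonal, and exhibit a negative eigenvalue of its partial transpose --- is exactly the paper's; the paper simply uses the single input $\ket{\psi_+}$ and reads off the eigenvalue $\tfrac14(1-\boldsymbol{\lambda}\cdot\boldsymbol{\lambda'})$ from the explicit $4\times4$ matrix. However, the combinatorial claim on which you explicitly rest your argument is false. The four Bell correlation triples you list all have an \emph{odd} number of minus signs (their product is $-1$), and the four PPT sign triples $(\varepsilon_1,\varepsilon_2,\varepsilon_3)$ all have an \emph{even} number of minus signs (product $+1$); hence every componentwise product $(s_1\varepsilon_1,s_2\varepsilon_2,s_3\varepsilon_3)$ has product $-1$. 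Only the four odd-parity triples are realized, not all eight. The ``finite check'' you deferred would have revealed this.

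Whether this sinks the proof depends on what you mean by $\lambda_i$. If $\boldsymbol{\lambda}\cdot\boldsymbol{\lambda'}$ is the signed dot product of the diagonal entries $\mu_i,\mu'_i$ (which is how the paper's own computation treats it), then you only need the triple $(-1,-1,-1)$, which is odd-parity and realizable --- indeed already from $\ket{\psi_+}$ with $\varepsilon=(-,+,-)$ --- and the proposition follows; in that reading your ``all eight'' claim is an unnecessary and false overreach, but the conclusion stands. If instead you insist on $\lambda_i=|\mu_i|$ and want the eigenvalue $\tfrac14(1-\sum_i|\mu_i\mu'_i|)$, you need the triple $\bigl(-\mathrm{sign}(\mu_1\mu'_1),-\mathrm{sign}(\mu_2\mu'_2),-\mathrm{sign}(\mu_3\mu'_3)\bigr)$, which is odd-parity only when $\prod_i\mu_i\mu'_i\ge 0$; when $\prod_i\mu_i\mu'_i<0$ your construction cannot make all three terms negative and the argument genuinely breaks. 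That case can be rescued --- the complete-positivity constraints $1\pm\mu_1\pm\mu_2\pm\mu_3\ge0$ imply that a unital qubit channel with $\prod_i\mu_i<0$ satisfies $\sum_i|\mu_i|\le1$ and is therefore entanglement-breaking, so the hypothesis $\sum_i|\mu_i\mu'_i|>1$ is vacuous there --- but that observation is missing from your write-up and must be supplied.
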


\begin{proof} The channel
$\cE_1\otimes\cE_2$ preserves entanglement of the Bell state
$\ket{\psi_{+}} = \frac{1}{\sqrt{2}}(\ket{00} + \ket{11})$ when
the matrix
\begin{eqnarray}
&& \!\!\!\!\! \rho_{\rm out}^{\Gamma} = \left( (\cE_1\otimes\cE_2) [\ket{\psi_{+}} \bra{\psi_{+}}] \right)^{\Gamma} = \nonumber\\
&& \!\!\!\!\! \frac{1}{4} \left(%
\begin{array}{cccc}
  1+\lambda_3 \lambda'_3 & 0 & 0 & \lambda_1 \lambda'_1 \!\!-\!\! \lambda_2 \lambda'_2\\
  0 & 1-\lambda_3 \lambda'_3 & \lambda_1 \lambda'_1 \!\!+\!\! \lambda_2 \lambda'_2 & 0 \\
  0 & \lambda_1 \lambda'_1 \!\!+\!\! \lambda_2 \lambda'_2 & 1-\lambda_3 \lambda'_3 & 0 \\
  \lambda_1 \lambda'_1 \!\!-\!\! \lambda_2 \lambda'_2 & 0 & 0 & 1+\lambda_3 \lambda'_3 \\
\end{array}%
\right) \nonumber
\end{eqnarray}

\noindent has negative eigenvalues. It takes place if $\lambda_1
\lambda'_1 + \lambda_2 \lambda'_2 + \lambda_3 \lambda'_3 \equiv
\boldsymbol{\lambda} \cdot \boldsymbol{\lambda'} >1$. In view of
the Peres-Horodecki criterion~\cite{peres,horodecki96}, the output
state $\rho_{\rm out}$ remains entangled.
\end{proof}

Let us note that this proposition is very efficient, e.g., for depolarizing
channels from Sec.~\ref{sec:depolarizing} we immediately have
not-EA channel $\cE_1\otimes\cE_2$ if $q_1 q_2 > \frac{1}{3}$.
Remind that $q_1 q_2 = \frac{1}{3}$ is a boundary between EA and
not-EA behavior of depolarizing channels.

Our goal was to give a complete picture of unital two-qubit
channels of the form $\cE_1\otimes\cE_2$. Our findings are
graphically summarized in Fig.~\ref{fig:unital}. Based on this
figure one could argue that the property of being
entanglement-annihilating is not something rare and the noises
should be relatively close to unitary ones (or, equivalently,
sufficiently far from complete depolarization) in order to
guarantee the conservation of some entanglement.

\begin{figure}
\includegraphics[width=8.50cm]{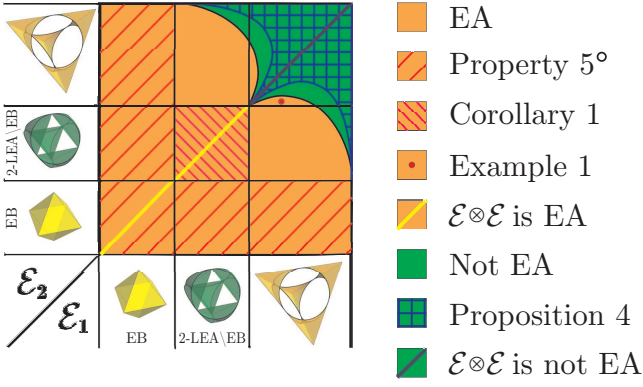}
\caption{\label{fig:unital} Schematic illustration of unital
two-qubit channels of the factorized form $\cE_1\otimes\cE_2$ with
respect to the entanglement-annihilating behavior. Each axis is
divided into three classes of unital single-qubit channels: EB,
2-LEA, and others.}
\end{figure}

\section{\label{sec:non-unital}Non-unital channels}

In this section we present a case study of a specific class
of non-unital qubit channels. In particular, we are interested which
of the statements valid in the case of unital channels can
be generalized also to non-unital case.

\subsection{\label{subsec:fact-extr-chann}Factorized extremal channels}

A channel is extremal if it cannot be expressed as a convex
combination of some other channels. A typical example is unitary
channels, which are the only extremal unital qubit channels. All
others are non-unital and include, for example, the families of
amplitude-damping channels $\cA_{p,0}$ and $\cA_{p,1}$ considered
in the next subsection.

Suppose constituents of the channel $\cE_1\otimes\cE_2$ can be
written as nontrivial convex sums of other channels, i.e. $\cE_j =
\mu_j \cF_{j} + (1-\mu_j) \cG_{j}$, $j=1,2$. Apparently,
$\cE_1\otimes\cE_2$ is then also a convex combination of
factorized channels $\cF_{j}\otimes\cG_{k}$ and
$\cG_{j}\otimes\cF_{k}$, $j,k=1,2$. If such factorized channels
are EA, then $\cE_1\otimes\cE_2$ is EA by property~$1^{\circ}$.
This analysis stimulates us to consider the extreme case of
channels $\cE_1\otimes\cE_2$, where neither $\cE_1$ nor $\cE_2$
cannot be resolved into nontrivial convex combinations.

As it was shown in the seminal paper~\cite{ruskai} by Ruskai et al., any
extremal qubit channel can be expressed in an apropriate basis as
a matrix
\begin{equation}
\label{extremal-parametrization}
\left(%
\begin{array}{cccc}
  1 & 0 & 0 & 0 \\
  0 & \cos u & 0 & 0 \\
  0 & 0 & \cos v & 0 \\
  \sin u \sin v & 0 & 0 & \cos u \cos v \\
\end{array}%
\right),
\end{equation}

\noindent where $u\in [0,2\pi)$ and $v\in [0,\pi)$.

At first we remind when an extremal qubit channel $\cE$ of the
form (\ref{extremal-parametrization}) is entanglement-breaking.
Let $\ket{\psi}$ be a maximally entangled two-qubit state.
Applying PPT criterion to the Choi-Jamio{\l}kowski state
$(\cE\otimes\cI)[\ket{\psi_{+}}\bra{\psi_{+}}]$, we find that
$\cE$ is EB if and only if $\cos u = 0$ or $\cos v = 0$.

In what follows we will use the fact that in formula
(\ref{extremal-parametrization}) the entries $\cos v$ and $\sin
u\sin v$ can simultaneously be made non-negative by an appropriate
choice of basis operators $\{\sigma_i\}_{i=0}^3$. The singular
values $\cos u$ and $\cos u \cos v$ have the same sign (either
positive or negative).

\begin{proposition}\label{prop:extremal2}
Suppose $\cE_1,\cE_2$ are extremal qubit channels. Then
$\cE_1\otimes\cE_2$ is EA if and only if either $\cE_1$ or $\cE_2$
is EB.
\end{proposition}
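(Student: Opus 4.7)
The $(\Leftarrow)$ direction is immediate from property $5^{\circ}$. For $(\Rightarrow)$ I prove the contrapositive: if neither $\cE_1$ nor $\cE_2$ is EB, then $\cE_1\otimes\cE_2$ is not EA. Concretely, I will exhibit a pure two-qubit input whose image is entangled, as detected by the Peres--Horodecki criterion applied to a carefully chosen maximally entangled state.

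First I exploit local unitary freedom. Writing each extremal channel as $\cE_j = \mathrm{Ad}_{W_L^{(j)}}\,\tilde\cE_j\,\mathrm{Ad}_{W_R^{(j)}}$ with $\tilde\cE_j$ in the standard form (\ref{extremal-parametrization}), the surrounding local unitaries can be absorbed into the output basis and the input state without affecting entanglement, so I may assume $\cE_j = \tilde\cE_j$ with parameters $u_j,v_j$ satisfying $\cos u_j,\cos v_j\neq 0$. In this form each channel acts diagonally in the Pauli basis with singular values $\lambda^{(j)}=(\cos u_j,\cos v_j,\cos u_j\cos v_j)$ and introduces a shift $d_j = \sin u_j\sin v_j$ along $\sigma_3$. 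I then feed in a maximally entangled input $\ket{\Psi_V} = (I\otimes V)\ket{\psi_+}$ parameterised by a unitary $V$; a direct Pauli-basis computation yields an output $\rho_{\rm out}(V)$ with local shifts $(0,0,d_j)$ and correlation tensor $T'_{kl}=d_1 d_2\,\delta_{k3}\delta_{l3}+\lambda_k^{(1)}\lambda_l^{(2)}(R_V)_{lk}$, where $R_V\in SO(3)$ is the Bloch rotation of $V$. Partial transposition on the second qubit negates column~2 of $T'$ and splits $\rho_{\rm out}^{\Gamma}$ into two $2\times 2$ blocks, and the task reduces to finding a $V$ producing a strictly negative block determinant.

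The main obstacle is that the naive choice $V = I$ (the standard Bell state $\ket{\psi_+}$, or its partner $\ket{\phi_+}$) does not always work. When the dominant singular values of $\tilde\cE_1$ and $\tilde\cE_2$ lie along the same Pauli axis, it does, and the PPT discriminant reduces to the identity $(1 + d_1 d_2 - XY)^2 - (X+Y)^2 - (d_1 + d_2)^2 < 0$ with $X=\cos u_1\cos u_2$ and $Y=\cos v_1\cos v_2$. When the dominant axes differ between the two channels, however, this discriminant can be nonnegative (a simple example is $u_1 = v_2 = \pi/2 - \epsilon$, $v_1 = u_2 = \epsilon$ for small $\epsilon$), and one must instead pick $R_V$ so as to permute the Pauli axes, aligning the two channels' preferred directions; in the example above, the phase gate $V = \mathrm{diag}(1,i)$ suffices. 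Using the extremality constraint $d_j^2=(1-\cos^2 u_j)(1-\cos^2 v_j)$, a short case analysis on which singular value of each channel is dominant then guarantees that a suitable $V$ always produces a strictly negative block determinant, yielding the requisite entangled output and completing the proof.
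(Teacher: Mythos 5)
Your easy direction and your overall strategy (reduce to the standard form (\ref{extremal-parametrization}) by local unitaries, probe with a maximally entangled input, detect output entanglement via PPT) are sound, and you correctly identify the genuine difficulty: a single fixed Bell state does not witness entanglement for every non-EB pair, so the probe must be adapted to the channels. But the step that resolves this difficulty --- ``a short case analysis on which singular value of each channel is dominant then guarantees that a suitable $V$ always produces a strictly negative block determinant'' --- is asserted rather than carried out, and that assertion \emph{is} the content of the proposition. Two concrete worries. First, the $2\times 2$ block structure of $\rho_{\rm out}^{\Gamma}$ on which your determinant test relies survives only for those $V$ whose Bloch rotation fixes the $\sigma_3$ axis (otherwise $T'$ acquires entries coupling axis $3$ to axes $1,2$, the off-diagonal blocks of $\rho_{\rm out}^{\Gamma}$ no longer vanish, and the determinant criterion you invoke is not the right one); this restricts your probes to rotations in the $1$--$2$ plane and their compositions with flips. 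Second, it is not evident that this restricted family suffices for every $(u_1,v_1,u_2,v_2)$ with $\cos u_1\cos v_1\cos u_2\cos v_2\neq 0$: your own displayed discriminant $(1+d_1d_2-XY)^2-(X+Y)^2-(d_1+d_2)^2<0$ is claimed only for aligned dominant axes, and no inequality is verified for the remaining cases. Until each case is written down and shown to yield a negative determinant, the proof is incomplete.

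For comparison, the paper sidesteps the case split by tuning the probe continuously in the channel parameters rather than discretely: it takes the maximally entangled input in local bases rotated by angles with $\cos\theta_j=\sin u_j\cos v_j/(\cos u_j\sin v_j)$, chosen precisely so that $\cE_1[\ket{\varphi}\bra{\varphi}]$ and $\cE_2[\ket{\chi}\bra{\chi}]$ are pure states $\ket{\xi}\bra{\xi}$ and $\ket{\zeta}\bra{\zeta}$, and then tests the reduction criterion (rather than PPT) against the one-parameter family $\ket{\psi_p}=\sqrt{p}\ket{\xi\otimes\zeta}+\sqrt{1-p}\ket{\xi_\perp\otimes\zeta_\perp}$. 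Minimizing the resulting expectation value over $p$ in closed form reduces everything to the single elementary inequality $\sqrt{1-t^2}>1-t$ for $t\in(0,1)$, with no alignment cases at all. To salvage your route you would need either to execute the promised case analysis in full, verifying the sign of the block determinant for every configuration of dominant axes, or to let the probe depend continuously on $(u_j,v_j)$ as the paper effectively does.
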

\begin{proof}
If either $\cE_1$ or $\cE_2$ is EB, then $\cE_1\otimes\cE_2$ is EA
in view of property~$5^{\circ}$. Let us now prove that the channel
$\cE_1\otimes\cE_2$ is not entanglement-annihilating if neither
$\cE_1$ nor $\cE_2$ is EB, i.e. $\cos u_1 \cos v_1 \cos u_2 \cos
v_2 \ne 0$, where parameters $(u_j,v_j)$ define the channel
$\cE_j$ according to formula (\ref{extremal-parametrization}).

Without loss of generality it can be assumed that $|\cos u_j| \ge
|\cos v_j|$, $j=1,2$. Consider the input state $\ket{\psi_{\rm
in}} = \frac{1}{\sqrt{2}} \big( \ket{\varphi\otimes\chi}+
\ket{\varphi_{\perp}\otimes\chi_{\perp}} \big)$, where the
orthogonal qubit states $\{ \ket{\varphi}, \ket{\varphi_{\perp}}
\}$ and $\{ \ket{\chi}, \ket{\chi_{\perp}} \}$ are parameterized
by angles $(\theta_1,\phi_1)$ and $(\theta_1,\phi_2)$,
respectively, as in Proposition~\ref{proposition1} (see formulas
(\ref{basis-vector})--(\ref{basis-vector-perp})). We put
$\phi_j=0$ and choose the angles $\theta_j$ such that
\begin{equation}
\cos\theta_j = \frac{\sin u_j \cos v_j}{\cos u_j \sin v_j} \,,
\quad \sin\theta_j = \frac{(\cos^2\! u_j - \cos^2\!
v_j)^{1/2}}{\cos u_j \sin v_j}, \nonumber
\end{equation}

\noindent then $\cE_1[\ket{\varphi}\bra{\varphi}]$ and
$\cE_2[\ket{\chi}\bra{\chi}]$ are known to be pure
states~\cite{ruskai}.

The reduction criterion (\ref{reduction-criterion}) guarantees
that the channel $\cE_1\otimes\cE_2$ is not EA if the operator
\begin{equation}
\label{M-operator} M=\Tr_2 \big[
(\cE_1\otimes\cE_2)[\ket{\psi_{\rm in}}\bra{\psi_{\rm in}}] \big]
\otimes I - (\cE_1\otimes\cE_2)[\ket{\psi_{\rm in}}\bra{\psi_{\rm
in}}]
\end{equation}

\noindent is not positive semi-definite. This takes place if there
exists a two-qubit state $\ket{\psi_{\rm test}}$ such that
$\bra{\psi_{\rm test}} M \ket{\psi_{\rm test}}<0$.

Let us construct a one-parametric family of candidates
$\ket{\psi_{p}}$, $p\in[0,1]$ for the state $\ket{\psi_{\rm
test}}$. To do that we use the pure states $\ket{\xi}\bra{\xi} =
\cE_1[\ket{\varphi}\bra{\varphi}]$ and $\ket{\zeta}\bra{\zeta} =
\cE_2[\ket{\chi}\bra{\chi}]$ and write $\ket{\psi_{p}} = \sqrt{p}
\ket{\xi\otimes\zeta} + \sqrt{1-p}
\ket{\xi_{\perp}\otimes\zeta_{\perp}}$. Direct calculation yields
\begin{eqnarray}
&& \!\!\!\!\!\!\!\!\!\!\!\! 4\bra{\psi_p} M \ket{\psi_p} = 1 -
\sin^2\! u_1 \sin^2\! u_2 - \cos^2\!
u_1 \cos^2\! u_2 \nonumber\\
&& \!\!\!\!\!\!\!\!\!\!\!\! - (1-2p)(\sin^2\! u_1 - \sin^2\! u_2)
- 4 \sqrt{p(1-p)} \Big( \cos
v_1 \cos v_2 \nonumber\\
&& \label{M-mean-value}\!\!\!\!\!\!\!\!\!\!\!\! + \sin u_1 \! \sin
u_2 \! \left[ (\cos^2\! u_1 \!-\! \cos^2\! v_1)(\cos^2\! u_2 \!-\!
\cos^2\! v_2) \right]^{\!1/2} \!\Big).
\end{eqnarray}

\noindent The state $\ket{\psi_{\rm test}}$ is then equal to such
$\ket{\psi_p}$ that minimizes the expression (\ref{M-mean-value}).
By a remark before the Proposition involved, $\sin u_j \ge 0$ and
$\cos v_j \ge 0$, $j=1,2$. Since $(1-2p)^2+(2\sqrt{p(1-p)})^2
\equiv 1$, the minimum of (\ref{M-mean-value}) is achievable and
reads
\begin{eqnarray}
&& \!\!\!\!\! 4\bra{\psi_{\rm test}} M \ket{\psi_{\rm test}} = 1 -
\sin^2\! u_1 \sin^2\! u_2 - \cos^2\!
u_1 \cos^2\! u_2 \nonumber\\
&& \!\!\!\!\! - \Big\{ (\sin^2\! u_1 - \sin^2\! u_2)^2 + 4 \Big(
\cos v_1 \cos v_2 \nonumber\\
&& \!\!\!\!\! + \sin u_1 \sin u_2 \! \left[ (\cos^2\! u_1 \!-\!
\cos^2\! v_1)(\cos^2\! u_2 \!-\! \cos^2\! v_2) \right]^{1/2}
\Big)^{\! 2} \Big\}^{\! 1/2}. \nonumber
\end{eqnarray}

It is not hard to see that the inequality $\bra{\psi_{\rm test}} M
\ket{\psi_{\rm test}} < 0$ is equivalent to
\begin{eqnarray}
&& \sin u_1 \sin u_2 \sqrt{(\cos^2\! u_1 \!-\! \cos^2\!
v_1)(\cos^2\! u_2 \!-\! \cos^2\! v_2)} \nonumber\\
&& + \cos v_1 \cos v_2 > |\sin u_1 \sin u_2 \cos u_1 \cos u_2|,
\nonumber
\end{eqnarray}

\noindent which is fulfilled for non-negative $\sin u_j$ and $\cos
v_j$ whenever $\cos u_1 \cos u_2 \cos v_1 \cos v_2 \ne 0$ because
$\sqrt{1-t^2}>1-t$ for all $t\in(0,1)$. Thus, the operator
(\ref{M-operator}) is not positive semi-definite and the channel
$\cE_1\otimes\cE_2$ is not EA.
\end{proof}

\begin{corollary}\label{prop:extremal1}
Suppose $\cE$ is an extremal qubit channel. Then $\cE\otimes\cE$
is EA, hence $\cE$ is 2-LEA, if and only if $\cE$ is EB.
\end{corollary}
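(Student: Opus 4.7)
The plan is to obtain this statement as an immediate specialization of Proposition~\ref{prop:extremal2} to the diagonal case $\cE_1=\cE_2=\cE$. Since both directions of the corollary are contained in the biconditional of that proposition, there is essentially no new content to prove; the task is simply to spell out the specialization carefully.

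For the \emph{if} direction, I assume $\cE$ is EB. Then by property~$5^{\circ}$ the channel $\cE\otimes\cE$ is EA (this is also the trivial sufficiency half of Proposition~\ref{prop:extremal2} with $\cE_1=\cE_2=\cE$). By definition, this makes $\cE$ a 2-LEA channel. For the \emph{only if} direction, I assume $\cE$ is extremal and $\cE\otimes\cE$ is EA. Applying Proposition~\ref{prop:extremal2} with $\cE_1=\cE_2=\cE$, the EA property forces at least one of $\cE_1,\cE_2$ to be EB, and since the two factors coincide, $\cE$ itself must be EB. Combining both directions yields the claimed equivalence.

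There is no real obstacle here: the main difficulty was resolved in the proof of Proposition~\ref{prop:extremal2}, where an explicit test state $\ket{\psi_{\rm test}}$ was constructed to violate the reduction criterion whenever $\cos u_1\cos u_2\cos v_1\cos v_2\neq 0$. For the corollary I simply invoke that result with $u_1=u_2=u$ and $v_1=v_2=v$, so the non-EA witness becomes $\cos^2 u\cos^2 v\neq 0$, i.e.\ $\cE$ itself is not EB. The only aspect worth emphasizing in the write-up is that no extra argument is needed to pass from ``$\cE_1$ or $\cE_2$ is EB'' to ``$\cE$ is EB'' in the diagonal case, as the disjunction collapses trivially.
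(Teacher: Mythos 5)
Your proof is correct and matches the paper's intent exactly: the paper states this as an immediate corollary of Proposition~\ref{prop:extremal2} with no separate argument, and your specialization to $\cE_1=\cE_2=\cE$ (with the disjunction collapsing trivially) is precisely that reasoning spelled out.
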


Using results of Proposition~\ref{prop:extremal2}, we can present
a complete picture (Fig.~\ref{fig:picture-extremal}) of factorized
extremal channels analogues to that of unital channels
(Fig.~\ref{fig:unital}). Comparison of two figures gives a clear
insight that factorized extremal channels exhibit the best
possible entanglement preserving properties.

\begin{figure}
\includegraphics[width=8.50cm]{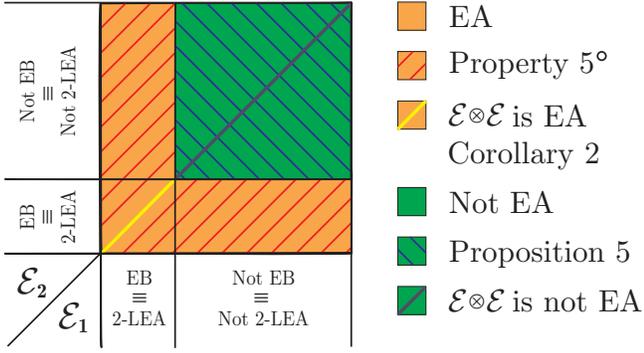}
\caption{\label{fig:picture-extremal} Schematic illustration of
factorized extremal two-qubit channels $\cE_1\otimes\cE_2$ with
respect to the entanglement-annihilating behavior. Each axis is
divided into two classes of extremal single-qubit channels: EB
that is equivalent to 2-LEA, and others.}
\end{figure}

\subsection{\label{subsec:amp-damping-channels}Generalized
amplitude-damping channels} Amplitude-damping channels
describe how a two-level system approaches the equilibrium due to
coupling with its environment, e.g., thanks to a spontaneous
emission process at zero temperature. If the environment has a
finite temperature, then such a dissipation process is described
by the action of a generalized amplitude-damping channel (see,
e.g.,~\cite{breuer}).

Kraus operators for a generalized amplitude-damping channel
$\cA_{p,\gamma}$ have the form
\begin{eqnarray}
&& E_0 = \sqrt{\gamma} \left(%
\begin{array}{cc}
  1 & 0 \\
  0 & \sqrt{1-p} \\
\end{array}%
\right), \qquad E_1 = \sqrt{\gamma} \left(%
\begin{array}{cc}
  0 & \sqrt{p} \\
  0 & 0 \\
\end{array}%
\right),\nonumber\\
&& E_2 = \sqrt{1-\gamma} \left(%
\begin{array}{cc}
  \sqrt{1-p} & 0 \\
  0 & 1 \\
\end{array}%
\right), E_3 = \sqrt{1-\gamma} \left(%
\begin{array}{cc}
  0 & 0 \\
  \sqrt{p} & 0 \\
\end{array}%
\right),\nonumber
\end{eqnarray}

\noindent where $p\in[0,1]$ determines the amplitude-damping rate
and $\gamma\in[0,1]$ is a parameter that depends on the
temperature and defines a fixed (equilibrium) state of
$\cA_{p,\gamma}$
$$\rho_{\infty} = \left(%
\begin{array}{cc}
  \gamma & 0 \\
  0 & 1-\gamma \\
\end{array}%
\right)\,.$$ If $\gamma=0$ or $\gamma=1$, then $\cA_{p,\gamma}$ is
simply an amplitude-damping channel. As it is mentioned above in
Sec.~\ref{subsec:fact-extr-chann}, amplitude-damping channels
$\cA_{p,0}$ and $\cA_{p,1}$ are extremal, thus obeying
Proposition~\ref{prop:extremal2}. A channel $\cA_{p_1,0} \otimes
\cA_{p_2,0}$ is EA if and only if $p_1=1$ or $p_2=1$, i.e. when at
least one of the constituent channels contracts the whole Bloch
sphere into a single pure state.

Let us now move on to generalized amplitude-damping channels
$\cA_{p,\gamma}$. We note that $\cA_{p,\gamma} = \gamma
\cA_{p,1}+(1-\gamma) \cA_{p,0}$. Since $\cA_{p,\gamma}$ is a
convex combinations of two amplitude-damping channels (extremal
qubit channels), we expect $\cA_{p_1,\gamma_1} \otimes
\cA_{p_2,\gamma_2}$ to exhibit worse entanglement preserving
properties than factorized extremal channels. We can expect that
the closer weights of two channels (the closer $\gamma$ to
$\frac{1}{2}$) the stronger the entanglement annihilation is.

\begin{figure}
\includegraphics[width=8.5cm]{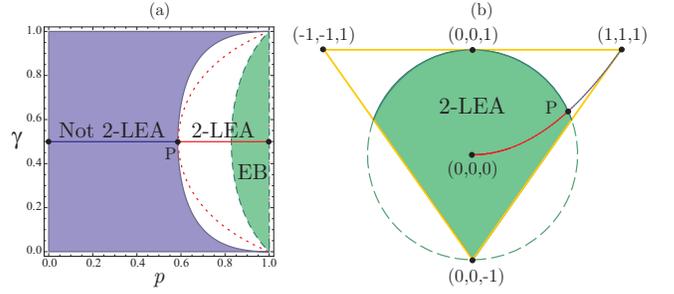}
\caption{\label{fig:amp-damping} (a) Regions of parameters $p$ and
$\gamma$, where the generalized amplitude-damping channel $\cE$ is
definitely not 2-LEA (violet) or $\cE$ is 2-LEA for sure (EB, red
solid line). Dotted curve presents a symbolic boundary between two
regions and seems to coincide with the solid curve. (b) 2-LEA
characterization of the generalized amplitude-damping channels
$\cE$ with $\gamma=1/2$. Such channels are unital and correspond
to infinite temperature. Family of these channels ($p\in[0,1]$) is
depicted among other unital channels in the cut
$\lambda_1=\lambda_2$ of the conventional reference frame
$(\lambda_1,\lambda_2,\lambda_3)$. Transition from 2-LEA to not
2-LEA behavior happens at point ${\rm
P}=(\sqrt{\sqrt{2}-1},\sqrt{\sqrt{2}-1},\sqrt{2}-1).$}
\end{figure}

Firstly, we note that the channel $\cA_{p,\gamma}$ is
entanglement-breaking if $p \ge (\sqrt{1 + 4\gamma (1-\gamma)} -
1) / 2 \gamma (1-\gamma)$. Let us remind that a channel is EB if
and only if its Choi-Jamio{\l}kowski state is separable and for
two qubits one can employ the PPT criterion to verify the
separability. In particular, for $\gamma=0$ (zero bath
temperature) the channel $\cA_p$ is EB only if $p=1$, hence, it
contracts the whole Bloch sphere into the equilibrium state.

Secondly, if we focus on 2-LEA channels, then it turns out that
$\cE\otimes\cE$ preserves entanglement of the state
$\ket{\psi_{+}}$ if $p \le
(1-\sqrt{2\gamma(1-\gamma)})/(1-2\gamma(1-\gamma))$. These results
are shown in Fig.~\ref{fig:amp-damping}a, where the dotted line
depicts a symbolic boundary between EA and not EA channels. The
numerical analysis encourages to assume that this boundary
coincides with the solid line.

Thirdly, if $\gamma = 1/2$, then $\cE$ becomes unital. In this
case to conclude EA of the channel $\cE\otimes\cE$ it suffices to
consider its action of on the maximally entangled state, e.g.,
$\ket{\psi_{+}}$. Illustration of such unital channels with
respect to other unital channels is presented in
Fig.~\ref{fig:amp-damping}b.

Finally, we can draw a conclusion that a picture of the
entanglement-annihilation behavior of generalized
amplitude-damping channels takes an intermediate form between the
unitary two-qubit channels (Fig.~\ref{fig:unital}) and the
factorized extremal channel (Fig.~\ref{fig:picture-extremal}).

\section{EA-extremal local channels}\label{sec:extremal}

Convexity is an important property of many channel sets. Thus, the
set of all qubit channels ${\sf T}_2$ and the set of all EB qubit
channels ${\sf T}_{\rm EB}$ are convex. Extreme points of these
sets are studied in Refs.~\cite{ruskai} and
\cite{horodecki-shor-ruskai}, respectively. Extreme points of the
set of EB channels are referred to as EB-extremal. It is shown in
the paper~\cite{horodecki-shor-ruskai}, that in qubit case all
EB-extremal single-qubit channels can be represented in the form
\begin{equation}
\label{EB-qubit-channel} \cF [\rho] = \bra{\psi} \rho \ket{\psi}
\ket{\varphi_1}\bra{\varphi_1} + \bra{\psi_{\perp}} \rho
\ket{\psi_{\perp}} \ket{\varphi_2}\bra{\varphi_2},
\end{equation}

\noindent where
$\ket{\psi},\ket{\psi_{\perp}},\ket{\varphi_1},\ket{\varphi_2}$
are pure qubit states and $\ip{\psi}{\psi_{\perp}} = 0$.

If we consider general two-qubit channels (not necessarily local),
then along with the set of all channels ${\sf T}_{\rm chan}$ and
the set of EB two-qubit channels there is a set of all EA channels
${\sf T}_{\rm EA}$ which is also convex (property~$1^{\circ}$).
The question arises itself to find extreme points of this set,
i.e. channels that are extreme for two-qubit EA channels. We will
refer to such channels as EA-extremal. The following proposition
provides EA-extremal channels of factorized form.

\begin{proposition} \label{prop:EA-extremal}
Let $\cF$ be EB-extremal one-qubit channel, then the channels
$\cF\otimes \cI$, $\cI\otimes\cF$ are EA-extremal.
\end{proposition}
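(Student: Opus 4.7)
The plan is to show that any nontrivial convex decomposition $\cF\otimes \cI=\mu \Phi_{1}+(1-\mu)\Phi_{2}$ with $\mu\in(0,1)$ and $\Phi_{1},\Phi_{2}\in{\sf T}_{\rm EA}$ forces $\Phi_{1}=\Phi_{2}=\cF\otimes \cI$. The strategy proceeds in two stages: first, show that each $\Phi_{i}$ must have the factorised form $\cE_{i}\otimes \cI$ for some single-qubit channel $\cE_{i}$; then deduce $\cE_{i}=\cF$ from EB-extremality of $\cF$.

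For the factorisation stage, I exploit the fact that $\cF$ maps $\ket{\psi}\bra{\psi}$ and $\ket{\psi_{\perp}}\bra{\psi_{\perp}}$ to the pure outputs $\ket{\varphi_{1}}\bra{\varphi_{1}}$ and $\ket{\varphi_{2}}\bra{\varphi_{2}}$, while $\cI$ preserves purity. For any pure product input $\ket{\alpha}\bra{\alpha}\otimes\ket{\chi}\bra{\chi}$, the second-qubit reduced state of $(\cF\otimes \cI)[\ket{\alpha}\bra{\alpha}\otimes\ket{\chi}\bra{\chi}]$ equals the pure $\ket{\chi}\bra{\chi}$. Since pure states are extreme, taking partial trace over qubit $1$ of the convex decomposition forces $\Tr_{1}\Phi_{i}[\ket{\alpha}\bra{\alpha}\otimes\ket{\chi}\bra{\chi}]=\ket{\chi}\bra{\chi}$ for both $i$. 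A bipartite state with one pure marginal must be of product form, so $\Phi_{i}[\ket{\alpha}\bra{\alpha}\otimes\ket{\chi}\bra{\chi}]=\cE_{i}^{\chi}[\ket{\alpha}\bra{\alpha}]\otimes\ket{\chi}\bra{\chi}$ for some linear map $\cE_{i}^{\chi}$ on first-qubit operators.

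The crucial step is to show $\cE_{i}^{\chi}$ is independent of $\chi$. Writing $\ket{+}\bra{+}=\tfrac{1}{2}(\ket{0}\bra{0}+\ket{0}\bra{1}+\ket{1}\bra{0}+\ket{1}\bra{1})$, expanding $\Phi_{i}[\sigma\otimes\ket{+}\bra{+}]$ by linearity and matching against $\cE_{i}^{+}[\sigma]\otimes\ket{+}\bra{+}$ block by block on the second qubit yields $\cE_{i}^{0}=\cE_{i}^{1}=\cE_{i}^{+}=:\cE_{i}$ from the diagonal blocks and $\Phi_{i}[\sigma\otimes\ket{0}\bra{1}]=\cE_{i}[\sigma]\otimes\ket{0}\bra{1}$ from the off-diagonal blocks. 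Repeating with $\ket{\chi}=(\ket{0}+i\ket{1})/\sqrt{2}$ handles the remaining second-qubit matrix unit, and bilinearity in the two tensor factors yields $\Phi_{i}=\cE_{i}\otimes \cI$; the map $\cE_{i}$ is automatically CP and trace preserving because $\Phi_{i}$ is.

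With factorisation in hand, property~$6^{\circ}$ applied to $\Phi_{i}\in{\sf T}_{\rm EA}$ gives that $\cE_{i}$ is EB. The decomposition then reads $\mu \cE_{1}+(1-\mu)\cE_{2}=\cF$ inside the convex set of EB channels, and EB-extremality of $\cF$ forces $\cE_{1}=\cE_{2}=\cF$, whence $\Phi_{1}=\Phi_{2}=\cF\otimes \cI$. The companion claim for $\cI\otimes \cF$ follows from the left-right symmetry recorded in property~$4^{\circ}$. I expect the main obstacle to be the $\chi$-independence step: one must translate data that is a priori available only for pure positive inputs (where the second marginal is pinned down to $\ket{\chi}\bra{\chi}$) into the factorised identity $\Phi_{i}[\sigma\otimes Y]=\cE_{i}[\sigma]\otimes Y$ valid for arbitrary operators $Y$, and this is precisely what the block-matching with three test vectors $\ket{\chi}$ accomplishes.
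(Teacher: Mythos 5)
Your proof is correct and follows the same skeleton as the paper's: show that any EA-decomposition of $\cF\otimes\cI$ must itself be factorized as $\cE_i\otimes\cI$, invoke property~$6^{\circ}$ to conclude each $\cE_i$ is EB, and finish with EB-extremality of $\cF$. Where you diverge is in how the factorization is obtained. The paper passes to Choi--Jamio{\l}kowski states: the Choi state of $\cF\otimes\cI$ is $\rho_{\cF}\otimes P_+$, whose marginal on the second pair of qubits is the pure state $P_+$; extremality of pure states forces the Choi states $\omega_j$ of the components to share that pure marginal, hence $\omega_j=\rho_j\otimes P_+$, which is already the statement $\cG_j=\cE_j\otimes\cI$ --- no separate ``independence of $\chi$'' step is needed, because the Choi matrix encodes the action on all inputs at once. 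You instead apply the pure-marginal-implies-product argument input by input, which is more elementary (no isomorphism required) but obliges you to glue the a priori $\chi$-dependent maps $\cE_i^{\chi}$ into a single $\cE_i$. That gluing does work, but as literally described (matching the $\ket{+}\bra{+}$ equation block by block against $\cE_i^{+}[\sigma]\otimes\ket{+}\bra{+}$) it is underdetermined, since the second-qubit blocks of $\Phi_i[\sigma\otimes\ket{0}\bra{1}]$ are unknown at that stage; the clean route is via antipodal pairs, e.g. summing the $\ket{+}$ and $\ket{-}$ equations gives $\Phi_i[\sigma\otimes I]$, whose vanishing off-diagonal second-qubit block yields $\cE_i^{+}=\cE_i^{-}$ and then $\cE_i^{0}=\cE_i^{1}=\cE_i^{\pm}$, after which the action on the matrix units $\ket{j}\bra{k}$ follows by linearity. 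With that small repair your argument is complete; the paper's Choi-state formulation buys exactly the avoidance of this bookkeeping.
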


\begin{proof}
Let us now prove the statement of Proposition by reductio ad
absurdum. Suppose $\cF\otimes \cI$ is not EA-extremal, i.e. there
exist two non-coincident EA channels $\cG_1$ and $\cG_2$ such that
$\cF\otimes \cI = \mu \cG_1 + (1-\mu) \cG_2$, where $\mu\in[0,1]$.
Using the Choi-Jamio{\l}kowski
isomorphism~\cite{choi,jamiolkowski}, the latter equation can be
rewritten as $\rho_{\cF}\otimes P_+ = \mu \omega_1 + (1-\mu)
\omega_2$, where $P_+$ is a 1-rank projector onto maximally
entangled state $\ket{\psi_+}$. Taking partial trace over the
first subsystem $A$, we get $P_+ = \mu \Tr_{A}[\omega_1] + (1-\mu)
\Tr_{A}[\omega_2]$, from which it follows that $\Tr_{A}[\omega_1]
= \Tr_{A}[\omega_2] = P_+$. Consequently, $\omega_j = \rho_j
\otimes P_+$, $j=1,2$, because if a subsystem is in a pure state,
then it is necessarily factorized from any other system. As a
result we may conclude that $\cF\otimes \cI = (\mu \cE_1 + (1-\mu)
\cE_2)\otimes \cI$, where $\cE_j\otimes\cI$ is EA, i.e. $\cE_j$ is
EB, $j=1,2$ (property $6^{\circ}$). Since $\cF$ is EB-extremal,
then the relation $\cF = \mu \cE_1 + (1-\mu) \cE_2$ can be only
fulfilled if $\cE_1 = \cE_2 = \cF$, i.e. $\cG_1=\cG_2$. This
contradiction concludes the proof.
\end{proof}

Let us stress two points. Firstly, the identity channel in the
proposition can be replaced by any unitary channel. Secondly, this
proposition is not a special case of
Proposition~\ref{prop:extremal2}, because there we assume channels
$\cE_1$ and $\cE_2$ to be extremal in the set of all qubit
channels ${\sf T}_2$, whereas in
Proposition~\ref{prop:EA-extremal} the local channel $\cF$ is
assumed to be EB-extremal. In particular, if both $\cE_1,\cE_2$
are extremal and EB, then $\cE_1\otimes\cE_2$ is simultaneuously
extremal, EB-extremal and EA-extremal. The following example
demonstrates the existence of EA-extremal channels that are not
extreme points of the set of all two-qubit channels ${\sf T}_{\rm
chan}$.

\begin{example} \label{example:EA-extremal}
Consider a phase damping channel
$\cF[\varrho]=\tfrac{1}{2}(\varrho+\sigma_z\varrho\sigma_z)$. Clearly, this
channel is unital and not extremal in the set of all channels. Nevertheless,
its action can be expressed also as
$$\cF[\varrho]=
\bra{0} \rho \ket{0}
\ket{0}\bra{0} + \bra{1} \rho
\ket{1} \ket{1}\bra{1}\,.
$$
This means the phase-damping channel is an extreme point of the
set of entanglement-breaking channels, hence, it is
EB-extremal~\cite{horodecki-shor-ruskai}. Then by
Proposition~\ref{prop:EA-extremal} the two-qubit channels
$\cF\otimes\cI$ and $\cI \otimes \cF$ are EA-extremal.
\end{example}

\section{\label{sec:duality}EA-duality}

In this section we will employ the idea of
entanglement-annihilation to introduce the concept of EA-duality
for local channels. Let us denote by ${\sf T}_d$ the set of
channels on a $d$-dimensional (non-composite) quantum system
(qudit). Let ${\sf Q}\subset {\sf T}_d$ be an arbitrary subset of
qudit channels. A subset $\widetilde{{\sf Q}}_n\subset{\sf T}_n$
is called $n$-EA-dual to ${\sf Q}$ if $\cE_1\otimes\cE_2$ is EA
for all $\cE_1\in{\sf Q}$ and $\cE_2\in\widetilde{{\sf Q}}_n$. To
be more precise,
\begin{equation}
\widetilde{{\sf Q}}_n = \{\cE_2\in{\sf T}_n ~ | ~
\cE_1\otimes\cE_2 ~{\rm is~EA~for~all}~ \cE_1\in{\sf Q}\}.
\end{equation}
When it is clear from the context we will omit the explicit mentioning
of the dimensions and  say $\widetilde{{\sf Q}}$
is EA-dual to ${\sf Q}$.

Using this concept, we may rephrase the goal of this paper as an
identification of the set of channels ${\sf T}_{\rm
ent}\subset{\sf T}_2$ such that EA-duals of all its elements
coincide with the set of entanglement-breaking channels, i.e.
$\widetilde{\{\cE\}}={\sf T}_{\rm EB}$ for all $\cE\in{\sf T}_{\rm
ent}$. Let us note that for any ${\sf Q}$ its EA-dual
$\widetilde{\sf Q}$ contains the entanglement-breaking channels,
i.e. ${\sf T}_{\rm EB}\subset\widetilde{{\sf Q}}$. It is also
clear that $\widetilde{\sf Q}\subset\widetilde{\{\cE\}}$ providing
that $\cE\in{\sf Q}$.

\begin{example}
Let us clarify the introduced concept on some examples:
\begin{itemize}
\item{} For unitary channels ${\cal U}[\varrho]=U\varrho U^*$ the EA-dual
coincides with EB qubit channels, i.e. $\widetilde{\{{\cal U}\}}
={\sf T}_{\rm EB}\subset{\sf T}_n$,
which means that the system under consideration is maximally robust
in sharing the entanglement unless an entanglement-breaking channel
is applied on the second system.

\item For the set ${\sf T}_{\rm EB}$ of entanglement-breaking
channels the EA-dual set equals to the set of all channels, i.e.
$\widetilde{{\sf T}}_{\rm EB}={\sf T}_{n}$. This only illustrates
the fact that entanglement-breaking channels destroy any
entanglement between the system under consideration and whatever
second system. For each entanglement-breaking channel $\cE$ its
EA-dual contains all channels, i.e. $\widetilde{\{\cE\}}={\sf
T}_{n}$.

\item The EA-dual of all channels is the set of entanglement-breaking
channels, i.e. $\widetilde{{\sf T}}_{d}={\sf T}_{\rm EB}\subset{\sf T}_n$.
Providing that $d=n$ we can write $\widetilde{\widetilde{{\sf T}}}_{d}
= {\sf T}_{d}$.

\end{itemize}
\end{example}

Fixing $d$ and $n$ the dual sets obviously satisfy the
following properties:
\begin{eqnarray}
&& {\sf Q}_1 \subset {\sf Q}_2 \Rightarrow \widetilde{{\sf Q}}_1
\supset
\widetilde{{\sf Q}}_2,\\
&& \label{duality-property-equality} \widetilde{({\sf Q}_1 \cup
{\sf Q}_2)} = \widetilde{{\sf Q}}_1 \cap
\widetilde{{\sf Q}}_2,\\
&& \widetilde{({\sf Q}_1 \cap {\sf Q}_2)} \supset \widetilde{{\sf
Q}}_1 \cup \widetilde{{\sf Q}}_2.
\end{eqnarray}

The results presented in the previous sections contain partial
answers to EA-duality sets for 2-LEA qubit channels, or
depolarizing qubit channels, or amplitude-damping qubit channels,
etc. However, further analysis is beyond the scope of this paper
and the full characterization of EA-duals of these and other
interesting subsets of qubit channels remains an open problem.

\section{\label{sec:conclusions}Summary}

In this paper we have investigated the robustness of entanglement
in two-qubit (spatially separated) systems under the influence of
independent reservoirs. In particular, we paid attention to
characterization of the so-called entanglement-annihlating (EA)
channels. These are the channels that completely annihilate any
entanglement initially present between the subsystems. In
contrast, the so-called entanglement-breaking channels (EB)
destroy any entanglement between the system they act on (two-qubit
system in our case) and any other system. The dramatic
differences, but also subtle relations, between these two concepts
were discovered for particular classes of channels.

We succeeded to characterize all unital two-qubits
entanglement-annihilating channels of the
factorized form $\cE_1\otimes\cE_2$ and
the results are nicely illustrated in Fig.~\ref{fig:unital}. We
derived a sufficient condition (Proposition~\ref{proposition4})
for a local unital channel not to be EA, which guaranties the
entanglement retention and enables to
perform entanglement-enabled experiments in the presence of such
noise. We gave (Proposition~\ref{proposition2}) the complete
characterization of unital
entanglement-annihilating channels $\cE\otimes\cE$ and showed
(Proposition~\ref{proposition3})
that such (2-LEA) channels $\cE$ form a convex subset of the set of all
unital single-qubit channels.

For example, in case of depolarizing channels $\cE_1$ and $\cE_2$
with rates $q_1$ and $q_2$, their product should be kept above
$\frac{1}{3}$. Above this critical value, there still exist
initial states of two-qubit system for which the entanglement
survives the effects of depolarizing noise.

Particular results have been also obtained for the case of extremal
non-unital channels, for which the entanglement turns out to be
more robust (in comparison with unital channels). In particular,
they are entanglement-annihilating only if one of the constituent
channels is entanglement-breaking meaning that (in this case)
the set of 2-LEA channels coincides with EB channels.

Much attention has been also focused on such a fundamental entity
as convexity of the set of all two-qubit channels ${\sf T}_{\rm
chan}$ and the set of EA channels ${\sf T}_{\rm EA}$. We have
revealed that the sets ${\sf T}_{\rm chan}$ and ${\sf T}_{\rm EA}$
have common extremal points corresponding to local channels. We
have constructed a class of purely EA-extremal channels
(phase-damping channels), i.e. channels that are extremal for
${\sf T}_{\rm EA}$ but are internal for ${\sf T}_{\rm chan}$.

Finally, we have introduced an important concept of EA-duality
between sets of channels defined on individual subsystems. This
concept gives another perspective on classification of channels
with respect to their entanglement annihilation `potential'.

To conclude, the presented analysis contains partial
characterization of local entanglement-annihilation two-qubit
channels. This class of channels is of particular importance and
interest in the domain of quantum information processing. Although
the complete understanding of the phenomena of
entanglement-annihilation is still missing, the presented results
represent important and practical steps towards this direction.
Our analysis shows that the phenomenon of being
entanglement-annihilating is not a rare one and, hence, deserves
further attention. From the practical point of view, we have
analyzed in detail classes of physically relevant qubit channels
(depolarizing, phase-damping, amplitude-damping) and among them
identified the `good' and `bad' ones.

\begin{acknowledgments}
This research was carried out while S.N.F. was visiting at the
Research Center for Quantum Information, Institute of Physics,
Slovak Academy of Sciences. S.N.F. is grateful for very kind
hospitality. This work was supported by EU integrated project
2010-248095 (Q-ESSENCE), APVV DO7RP-0002-10 and VEGA 2/0092/09 (QWAEN).
S.N.F. thanks the Russian Foundation for Basic Research (projects
09-02-00142, 10-02-00312, and 11-02-00456), the Russian Science
Support Foundation, the Dynasty Foundation, and the Ministry of
Education and Science of the Russian Federation (projects
2.1.1/5909, $\Pi$558, 14.740.11.0497, and 14.740.11.1257).
T.R. thanks to APVV LPP-0264-07 (QWOSSI) and
M.Z. acknowledges the support of SCIEX Fellowship 10.271.
\end{acknowledgments}

\bibliography{EA}

\end{document}